\let\union\relax
\begin{document} 

\title{Abstract Interpretation with Unfoldings\thanks{Supported
by ERC project 280053 (CPROVER) and a Google Fellowship.}}

\author{
Marcelo Sousa\inst{1},
C{\'e}sar Rodr{\'\i}guez\inst{2},
Vijay D'Silva\inst{3}
\and
Daniel Kroening\inst{1}}
\authorrunning{Marcelo Sousa \and C{\'e}sar Rodr{\'\i}guez \and Vijay D'Silva \and Daniel Kroening }

\institute{
University of Oxford, United Kingdom
\and
Universit\'e Paris 13, Sorbonne Paris Cit\'e, LIPN, CNRS, France
\and
Google Inc., San Francisco
}

\maketitle  
\pagenumbering{arabic}

\begin{abstract}
We present and evaluate a technique for computing path-sensitive
interference conditions during abstract interpretation of concurrent
programs.
In lieu of fixed point computation, we use prime event structures
to compactly represent causal dependence and interference between
sequences of transformers.
Our main contribution is an unfolding algorithm that uses
a new notion of independence to avoid redundant transformer
application,
thread-local fixed points to reduce the size of the unfolding,
and a novel cutoff criterion based on subsumption to guarantee 
termination of the analysis.
Our experiments show that the abstract unfolding produces
an order of magnitude fewer false alarms than a mature abstract interpreter,
while being several orders of magnitude faster than solver-based tools 
that have the same precision.
\end{abstract}

\section{Introduction}
\label{sec:intro}
This paper is concerned with the problem of extending an abstract
interpreter for sequential programs to analyze concurrent programs.
A na\"ive solution to this problem is a global fixed point analysis
involving all threads in the program. 
A~distinct solution is to analyze threads in isolation and exchange
invariants on global variables between
threads~\cite{Mine14,Mine12,CH09}.
Related research, including this paper, 
seeks analyses that preserve the scalability of the local
fixed point approach without losing the precision of a
global fixed point.

We design and implement an \emph{abstract unfoldings}
data structure and analysis that combines an abstract domain 
with the type of unfolding algorithm used to analyze Petri nets.
An unfolding is a tree-like structure that uses partial orders to
represent concurrent executions and conflict relations to represent
interference.
A challenge in combining unfoldings with abstract domains is that
abstract domains typically provide approximations of states and
transitions, not traces, and are not equipped with interference
information.
Another challenge is that unfolding algorithms are typically applied
to explicit-state analysis of systems with deterministic transitions
while abstract domains are symbolic and non-deterministic
owing to abstraction.

The main idea of this paper is to construct an unfolding \emph{of an 
analyzer}, rather than a program.
An event is the application of a transformer in an analysis context, and
concurrent executions are replaced by a partial order on transformer
applications.
We introduce independence for transformers and use this notion to
construct an unfolding of a domain given a program and independence
relation.
The unfolding of a domain is typically large and we use thread-local
fixed point computation to reduce its size without losing interference
information.

From a static analysis perspective, our analyser is a path-sensitive
abstract interpreter that uses an independence relation to compute a
history abstraction (or trace partition) and 
organizes exploration information in an unfolding.
%
From a dynamic analysis perspective, our approach is a super-optimal
POR~\cite{RSSK15} that
uses an abstract domain to collapse branches of the computation tree originating
from thread-local control decisions.
\subsubsection*{Contribution}
We make the following contributions towards reusing an abstract interpreter
for sequential code for the analysis of a concurrent program.
\begin{enumerate}
\item
  A new notion of transformer independence for unfolding with domains
  (\secref{ind}).
\item
  The unfolding of a domain, which provides a sound way to combine
 transformer application and partial-order reduction
  (\secref{unf.def}).
\item
  A method to construct the unfolding using thread-local
  analysis and pruning techniques 
  (\secref{cutoffs}, \secref{collapse}).
\item
  An implementation and empirical evaluation demonstrating the
  trade-offs compared to an abstract interpreter and solver-based tools
  (\secref{exp}).
\end{enumerate}

We provide the proofs of our formal results in the Appendix.

\section{Motivating Example and Overview}
\label{sec:overview}

\begin{figure}[t]
\centering
\includegraphics[scale=0.85]{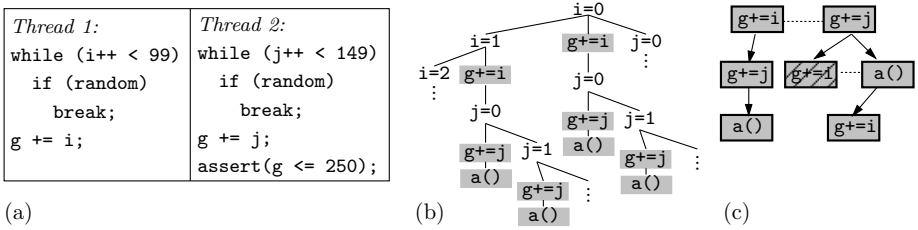}
\caption{(a) Example program (b)~Its POR exploration tree (c)~Our
unfolding}
\label{fig:forfor}
\end{figure}

Consider the program given in \figref{forfor}~(a), which we wish to prove
safe using an interval analysis.
Thread~1 (\resp 2) increments~\verb!i! (\resp \verb!j!) in a loop that can
non-deterministically stop at any iteration.
All variables are intialized to~0 and the program is safe as the
\verb!assert! in thread~2 cannot be violated.

When we use a POR approach to prove safety of this program, the exploration 
algorithm exploits the fact that only the interference between statements
that modify the variable \verb!g! can lead to distinct final states.
This interference is typically known as~\emph{independence}~\cite{Pel93,God96}. 
The practical relevance of independence is that one can use it
to define a safe fragment, given in~\figref{forfor}~(b), of 
the computation tree of the program
which can be efficiently explored~\cite{RSSK15,AAJS14}.
At every iteration of each loop, the conditionals open one more branch in the
tree. 
Thus, each branch contains a different write to the global variable, which
is dependent with the writes of the other thread as the order of their application
reaches different states. 
As a result, the exploration tree becomes intractable very fast.
It is of course possible to bound the depth of the exploration at the expense 
of completeness of the analysis.

The thread-modular static analysis that is implemented in
\astreea{}~\cite{Mine14} or \mbox{\framac{}}~\cite{YaB12} incorrectly
triggers an alarm for this program.
These tools statically analyze each thread in isolation assuming that
\verb!g! equals~0.
Both discover that thread~1 (\resp~2) can write $[0,100]$
(\resp~$[0,150]$) to \verb!g! when it reads~0 from~it.
Since each thread can modify the variable read by the other, they repeat
the analysis starting from the join of the new interval with the initial
interval.
In this iteration, they discover that thread~2 can write $[0,250]$ to \verb!g! when it
reads $[0,150]$ from it.
The analysis now incorrectly determines that it needs to re-analyze
thread~2, because thread~1 also wrote $[0,250]$ in the previous iteration and
that is a larger interval than that read by thread~2.
This is the reasoning behind the false alarm.
The core problem here is that these methods are path-insensitive across
thread context switches and that is insufficient to prove this assertion.
The analysis is accounting for a thread context switch
that can never happen (the one that flows~$[0,250]$ to thread~2 before thread~2
increments~\verb!g!).
More recent approaches~\cite{KW16,MM17} can achieve a higher degree of
flow-sensitivity but they either require manual annotations to guide the
trace partitioning or are restricted to program locations outside of a loop
body.

Our key contribution is an unfolding that is flow- and path-sensitive across
interfering statements of the threads and
path-insensitive inside the non-interfering blocks of statements.
\Cref{fig:forfor}~(c) shows the unfolding structure 
that our method explores for this program.
The boxes in this structure are called \emph{events} and 
they represent the action of firing a transformer after a history of firings.
The arrows depict~\emph{causality} constraints between events, \ie, the
\emph{happens-before} relation.
Dotted lines depict the immediate \emph{conflict relation}, stating that two
events cannot be simultaneously present in the same concurrent execution, known 
as \emph{configuration}.
This structure contains three maximal configurations (executions),
which correspond to the three ways in which the statements reading 
or writing to variable~\verb!g! can interleave.

Conceptually, we can construct this unfolding using the following idea:
start by picking an arbitrary interleaving.
Initially we pick the empty one which reaches the initial state of
the program.
Now we run a sequential abstract interpreter on one thread, say thread~1, 
from that state and stop on every location that reads or writes a global 
variable.
In this case, the analyzer would stop at the statement \verb!g += i! 
with the invariant that~$\tup{g \mapsto [0,0], i \mapsto [0,100]}$.
This invariant corresponds to the first event of the unfolding (top-left corner).
The unfolding contains now a new execution, so we iterate again the same
procedure by picking the execution consisting of the event we just discovered.
We run the analyser on thread~2 from the invariant reached by
that execution and stop on any global action. 
That gives rise to the event~\verb!g+=j!, and in the next step 
using the execution composed of the two events we have seen, we 
discover its causal successor \verb!a()!.
Note however that before visiting that event, we could have added 
event~\verb!g+=j! corresponding to the invariant of running 
an analyser starting from the initial state on thread~2.
Furthermore we know that because both invariants are related to
the same shared variable, these two events must be ordered.
We enforce that order with the conflict relation.

Our method mitigates the aforementioned branching 
explosion of the POR tree because it never unfolds 
the conflicting branches of a naive exploration.
In~comparison to thread-modular analysis,
it remains precise about the context switches 
because it uses a history-preserving data structure.

Another novelty of our approach is the observation that 
certain events are~\emph{equivalent} in the sense that 
the state associated with one is~\emph{subsumed} by the second.
In our example, one of these events, known as a~\emph{cutoff event},
is labelled by \verb!g+=i! and denoted with a striped pattern.
Specifically, the configuration $\set{\texttt{g+=i}, \texttt{g+=j}}$ reaches the
same state as $\set{\texttt{g+=j}, \texttt{g+=i}}$.
Thus, no causal successor of a cutoff event needs to be explored 
as any action that we can discover from the cutoff event can be found
somewhere else in the structure.

\paragraph{Outline.}
The following diagram displays the various concepts and transformations 
presented in the paper:

\begin{figure}
\input{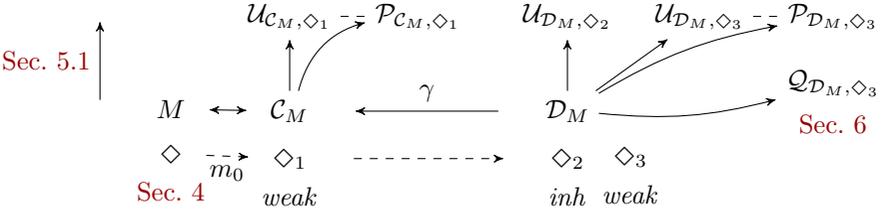}
\begin{center}
\begin{tikzpicture}[->,>=stealth',
 shorten >=0.5pt,
 node distance=30pt,
 auto,
 font=\normalsize
 ]

\begin{scope}[anchor=north]

\matrix[row sep=20pt, column sep=10pt]
{
  \node (s0) {};
  &
  \node (s1) [label=left:{}] {$\unf{\ccdom M,\indep_1}$};
  &
  \node (s11) {$\pref{\ccdom M,\indep_1}$};
  &
  \node (s111) [label=left:{}]  {$\unf{\ddom_M,\indep_2}$};
  &
  \node (v1) {$\unf{\ddom_M,\indep_3}$};
  &
  \node (v2) {$\pref{\ddom_M,\indep_3}$};
  \\
  \node (s)  {$M$};
  &
  \node (s2)  {$\ccdom M$};
  & 
  &
  \node (s3) {$\ddom_M$};
  \\
};
\node (v3) [below = 0.3cm of v2, label=below:{\cref{sec:collapse}}] {$\preff{\ddom_M,\indep_3}$};
\node (i1) [below = 0.1cm of s, label=below:{\cref{sec:ind}}] {$\indep$};
\node (i2) [below = 0.1cm of s2,label=below:{\emph{weak}}] {$\indep_1$};
\node (i3) [below = 0.1cm of s3,label=below:{\emph{inh}}] {$\indep_2$};
\node (i4) [below right = 0.1cm of s3,label=below:{\emph{weak}}] {$\indep_3$};
\node (sl) [left = 0.5cm of s] {};
\node (su) [left = of s11, above = of sl] {};

\draw[->] (sl) -- node [left] {\cref{sec:unf.def}} (su);

\path[poset-order]
  (s1) edge (s11)
  (v1) edge (v2)
  ;
\path[->] 
  (s2) edge [bend left] (s11)
	   edge (s1)
  (s3) edge [shorten <=0.5cm, shorten >=0.5cm] node [above] {$\gamma$} (s2)
  (i1) edge [shorten <=0.2cm, shorten >=0.2cm, dashed] node {$m_0$} (i2)
  (i2) edge [shorten <=0.5cm, shorten >=0.5cm, dashed] (i3)
  (s3) edge (s111)
	   edge (v1)
	   edge [bend left=10] (v2)
	   edge [bend right=10] (v3)
;
\path[<->]
  (s) edge [shorten <=0.2cm, shorten >=0.2cm] (s2)
;
\end{scope}
\end{tikzpicture}
\end{center}
\caption{Overview diagram}
\label{fig:overview}
\end{figure}

Let~$M$ be the program under analysis whose concrete 
semantics~$\cdom_M$ is abstracted by a domain~$\ddom_M$.
The relations~$\indep$ and~$\indep_i$ are independence relations
with different levels of granularity 
over the transformers of~$M$, $\ccdom M$, or $\ddom_M$.
We denote by $\unf{\ddom',\indep'}$ the \emph{unfolding} of 
either~$\ccdom M$ or $\ddom_M$ under independence relation~$\indep'$
(defined in \cref{sec:unf.def}).
Whenever we unfold a domain using a weak independence relation
($\indep_2$ on~$\ccdom M$ and $\indep_3$ on~$\ddom_M$),
we can use cutoffs to prune the unfolding represented
by the dashed line between unfoldings.
The resulting unfolding (defined in~\cref{sec:cutoffs})
is denoted by the letter~$\ppref$.
The main contribution of our work is the 
\emph{compact unfolding}, $\preff{\ddom_M,\indep_3}$, 
described above.

\section{Preliminaries}
\label{sec:prelim}

\begin{figure}[h]
\centering
\includegraphics[width=\textwidth]{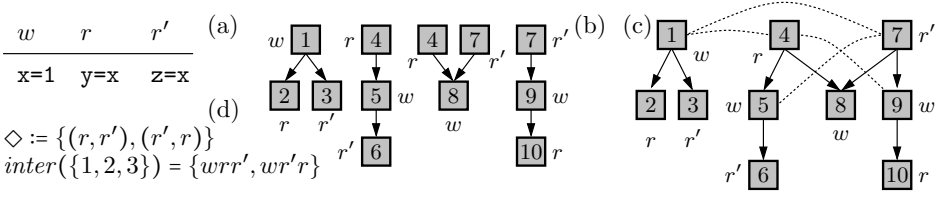}
\caption{Unfolding of a simple program}
\label{fig:exunf}
\end{figure}

There is no new material in this section,
but we recommend the reader to review the definition of an analysis
instance, which is not standard.

\paragraph{Concurrent programs.}
We model the semantics of a concurrent, non-deterministic program by a
labelled transition system
$M \eqdef \tup{\Sigma, \to, A, s_0}$, where
$\Sigma$ is the set of \emph{states},
$A$ is the set of \emph{program statements},
${\to} \subseteq \Sigma \times A \times \Sigma$ is the transition relation, and
$s_0$ is the \emph{initial state}.
The identifier of the thread containing a statement~$a$
is given by a function $p \colon A \to \nat$.
If $s \fire{a} s'$ is a transition, 
the statement $a$ is \emph{enabled} at $s$, 
and $a$ can \emph{fire} at $s$ to produce $s'$.
We let~$\enabl s$ denote the set of statements 
enabled at~$s$.
As statements may be non-deterministic, 
firing $a$ may produce more than one such $s'$.
%
%
A sequence $\sigma \eqdef a_1 \ldots a_n \in A^*$ is a \emph{run} when
there are states $s_1, \ldots, s_n$ satisfying
$s_0 \fire{a_1} s_1 \ldots \fire{a_n} s_n$.
For such $\sigma$ we define $\state \sigma \eqdef s_n$.
We let $\runs M$ denote the set of all runs of~$M$,
and $\reach M \eqdef \set{\state \sigma \in \Sigma \colon \sigma \in \runs M}$
the set of all \emph{reachable states} of~$M$.
%

\paragraph{Analysis Instances.}
A lattice~$\tup{D, \sqsubseteq_D, \join_D, \meet_D}$ is a poset
with a binary, least upper bound operator $\join_D$ called
\emph{join} and a binary, greatest lower bound operator $\meet_D$ called
\emph{meet}.
%
%
%
%
A~\emph{transformer} $f \colon D \to D$ is a monotone function on 
$D$.
A~\emph{domain} $\tup{D, \sqsubseteq, F}$ consists of a lattice 
and a set of transformers.
We adopt standard assumptions in the literature that~$D$ has a least
element~$\bot$, called \emph{bottom}, and that transformers
are~\emph{bottom-strict},~\ie~$f(\bot) = \bot$.
To simplify presentation, we equip domains with sufficient structure
to lift notions from transition systems to domains,
and assume that domains represent control and data states.

\begin{definition}
An~{analysis instance}~$\ddom \eqdef \tup{D, \sqsubseteq, F, d_0}$, 
consists of a domain $\tup{D, \sqsubseteq, F}$
and an {initial element} $d_0 \in D$.
\label{def:ainst}
\end{definition}
A transformer~$f$ is \emph{enabled} at an element $d$ 
when $f(d) \ne \bot$, and the result of \emph{firing} $f$ at $d$ 
is $f(d)$.
%
%
%
The element~\emph{generated by} or \emph{reached by} a sequence of
transformers $\sigma \eqdef f_1, \ldots, f_m$ is the application 
$\state \sigma \eqdef (f_m \comp \ldots \comp f_1) (d_0)$
of transformers in $\sigma$ to $d_0$.
Let ~$\reach \ddom$ be the set of reachable elements of $\ddom$.
The sequence $\sigma$ is a \emph{run} if $\state \sigma \ne \bot$
and $\runs \ddom$ is the set of all runs of~$\ddom$.

The~\emph{collecting semantics} of a transition system ~$M$ is the analysis 
instance~$\ccdom M \eqdef \tup{\pows \Sigma, \subseteq, F, \set{s_0}}$,
where~$F$ contains a transformer
$
f_a (S) \eqdef \set{s' \in \Sigma \colon s \in S \land s \fire a s'}
$
for every statement $a$ of the program.
The \emph{pointwise-lifting} of a relation $R \subseteq A \times A$ on
statements to transformers in $\ddom$ is 
$R_\ddom = \set{\tup{f_a, f_{a'}} \setsep \tup{a,a'} \in R}$.
Let $m_0 \colon A \to F$ be 
map from statements to transformers: $m_0(a) \eqdef f_a$.
An analysis instance
$\bar \ddom = \tup{\bar D, \barsqsub, \bar F, \bar d_0}$
is an \emph{abstraction} of 
$\tup{D, \sqsubseteq, F, d_0}$
if there exists a \emph{concretization function}
$\gamma: \bar D \to D$, 
which is monotone and satisfies that
$d_0 \sqsubseteq \gamma(\bar d_0) $,
and that 
$f \comp \gamma \sqsubseteq \gamma \comp \bar f$,
where the order between functions is pointwise.

\paragraph{Labelled Prime Event Structures.}
Event structures are tree-like representations of system behaviour
that use partial orders to represent concurrent interaction.
\figref{exunf}~(c) depicts an event structure.
The nodes are events and solid arrows, 
represent causal dependencies: events $4$ and $7$ must fire before $8$
can fire.
The dotted line represents conflicts: $4$ and $7$
are not in conflict and may occur in any order, but $4$ and $9$ are in
conflict and cannot occur in the same execution.

A \emph{labelled prime event structure}~\cite{NPW81}
(\pes)
is a tuple $\les \eqdef \tup{E, <, \cfl, h}$ with
a set of events $E$,
a causality relation ${<} \subseteq E \times E$, which is a strict partial order,
a conflict relation ${\cfl} \subseteq E \times E$ that is symmetric
and irreflexive, and a labelling function $h \colon E \to X$.
The components of $\les$ satisfy
  (1)
  the \emph{axiom of finite causes}, 
  that for all $e \in E$, $\set{e' \in E \colon e' < e}$ is finite, and
  (2)
  the \emph{axiom of hereditary conflict},
  that for all $e,e',e'' \in E$, if $e \cfl e'$ and $e' < e''$, then $e \cfl e''$.
%

The \emph{history} of an event
$\causes e \eqdef \set{e' \in E \colon e' < e}$
is the least set of events that must fire before $e$ can fire.
A \emph{configuration} of $\les$ is a finite set~$C \subseteq E$
that is 
  (\emph{i})
  (causally closed)
  $\causes e \subseteq C$ for all $e \in C$, and
  (\emph{ii})
  (conflict free)
  $\lnot (e \cfl e')$  for all $e, e' \in C$.
We let $\conf \les$ denote the set of all configurations of~$\les$.
For any $e \in E$, the \emph{local configuration} of~$e$ is defined as
$[e] \eqdef \causes e \cup \set e$.
In \figref{exunf}~(c), the set $\set{1,2}$ is a configuration, and in fact it is
a local configuration, \ie, $[2] = \set{1,2}$.
The set~$\set{1,2,3}$ is a $\subseteq$-maximal configuration.
The local configuration of event~8 is $\set{4,7,8}$.
Given a configuration~$C$, we define the \emph{interleavings} of~$C$ as
$\inter C \eqdef
\set{h(e_1), \ldots, h(e_n) \colon
\forall
e_i, e_j \in C, e_i < e_j \implies i < j}$.
An interleaving corresponds to the sequence labelling any
topological sorting (sequentialization) of the events in the configuration.
We say that $\les$ is finite iff $E$ is finite.
\figref{exunf}~(d) shows the interleavings of configuration~$\set{1,2,3}$.

Event structures are naturally (partially) ordered by a \emph{prefix}
relation~$\ispref$.
Given two \peses
$\les \eqdef \tup{E, <, \cfl, h}$ and
$\les' \eqdef \tup{E', <', \cfl', h'}$,
we say that
$\les$ is a \emph{prefix} of~$\les'$, written $\les \ispref \les'$,
when
$E \subseteq E'$,
$<$ and $\cfl$ are the projections of $<'$ and $\cfl'$ to~$E$,
and $E \supseteq \set{e' \in E' \colon e' < e \land e \in E}$.
Moreover, the set of prefixes of a given \pes~$\les$ equipped with~$\ispref$ is a
complete lattice.

\section{Independence for Transformers}
\label{sec:ind}

Partial-order reduction tools use a notion called independence to
avoid exploring concurrent interleavings that lead to the same state.
Our analyzer uses independence between transformers to compactly
represent transformer applications that lead to the same result.
The contribution of this section is a notion of independence for
transformers (represented by the lowest horizontal line in~\cref{fig:overview}) 
and a demonstration that abstraction may both create and
violate independence relationships.

We recall a standard notion of independence for
statements~\cite{Pel93,God96}.
Two statements~$a, a'$ of a program~$M$ \emph{commute at} a state~$s$ iff
\begin{itemize}
\item
  if $a \in \enabl s$ and $s \fire a s'$,
  then $a' \in \enabl s$ iff $a' \in \enabl{s'}$; and
\item
  if $a, a' \in \enabl s$,
  then there is a state $s'$ such that
  $s \fire{a.a'} s'$ and
  $s \fire{a'.a} s'$.
\end{itemize}
Independence between statements is an underapproximation of
commutativity.
A~relation ${\indep} \subseteq A \times A$ is an
\emph{independence} for~$M$ if it is symmetric, irreflexive, and 
satisfies that every 
$(a, a') \in {\indep}$ commute at every reachable state of~$M$.
In~general, $M$ has multiple independence relations; $\emptyset$ is always
one of them.
Suppose independence for transformers is defined by replacing
statements and transitions with
transformers and transformer application, respectively.
\cref{ex:weakindep} illustrates that 
an independence relation on statements cannot be lifted to obtain
transformers that are independent under such a notion.
\begin{example}
\label{ex:weakindep}
Consider the collecting semantics~$\ccdom M$ of a program~$M$ with 
two variables, \texttt{x} and \texttt{y}, 
two statements $a \eqdef \texttt{assume(x==0)}$ and
$a' \eqdef \texttt{assume(y==0)}$, and initial 
element~$d_0 \eqdef \set{\tup{x \mapsto 0, y \mapsto 1}, \tup{x \mapsto 1, y \mapsto 0}}$.
Since $a$ and $a'$ read different variables,
$R \eqdef \set{\tup{a,a'}, \tup{a',a}}$ is an independence relation
on~$M$.
Now observe that
$\set{\tup{f_a,f_{a'}}, \tup{f_{a'},f_a}}$ is not an independence relation
on~$\ccdom M$, as~$f_a$ and~$f_{a'}$ disable each other.
Note, however, that $f_a(f_{a'}(d_0))$ and $f_{a'}(f_{a}(d_0))$ are both
$\bot$.
\end{example}

Weak independence, defined below, allows transformers to be
considered independent even if they disable each other.

\begin{definition}
Let $\ddom \eqdef \tup{D, \sqsubseteq, F, d_0}$
be an analysis instance.
A relation ${\indep} \subseteq F \times F$ is a {weak independence} on
transformers if it is symmetric, irreflexive, and satisfies that 
$f \indep f'$ implies $f(f'(d)) = f'(f (d))$
for every $d \in \reach \ddom$.
Moreover, $\indep$ is an \emph{independence} if it is a weak
independence and satisfies that if 
$f(d) \neq \bot $, then 
$(f \comp f') (d) \neq \bot$ iff $f'(d) \neq \bot$,
for all $d \in\reach \ddom$.
\label{def:ind-trans}
\end{definition}
Recall that $R_\ddom$ is the lifting of a relation on statements to
transformers.
Observe that the relation $R$ in~\cref{ex:weakindep}, when lifted to
transformers is a weak independence on~$\ccdom M$.
The proposition below shows that independence relations on
statements generate weak independence on transformers over~$\ccdom M$.
\begin{restatable}[Lifted independence]{proposition}{liftedindep}
If $\indep$ is an independence relation on~$M$,
the lifted relation~$\indep_{\ccdom M}$ is a weak independence 
on the collecting semantics~$\ccdom M$.
\label{prop:lifting-independence}
\end{restatable}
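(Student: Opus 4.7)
The plan is to verify the three requirements of weak independence in \cref{def:ind-trans} for $\indep_{\ccdom M}$: symmetry, irreflexivity, and the commutativity equation $f_a(f_{a'}(S)) = f_{a'}(f_a(S))$ for every $S \in \reach{\ccdom M}$. The first two transfer immediately from $\indep$ via the pointwise lifting, since $(f_a, f_{a'}) \in \indep_{\ccdom M}$ iff $(a, a') \in \indep$; so the whole argument reduces to the commutativity equation.

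The bridge between the statement-level hypothesis and the set-level goal is a small preservation lemma: every state $s$ appearing in a reachable element $S \in \reach{\ccdom M}$ is itself a reachable state of~$M$, \ie\ $s \in \reach M$. I would prove this by induction on the length of a transformer sequence $f_{a_1}, \ldots, f_{a_n}$ that generates $S$ from $\set{s_0}$, unfolding $f_a(S') = \set{s'' \colon \exists s \in S'.\ s \fire a s''}$ to show that each $s \in S$ is the endpoint of a concrete run $s_0 \fire{a_1} \cdots \fire{a_n} s$ in~$M$.

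With the lemma in hand, pick $(f_a, f_{a'}) \in \indep_{\ccdom M}$ and a reachable~$S$, and unfold
\begin{align*}
f_a(f_{a'}(S)) &= \set{s'' \colon \exists s \in S, s^*.\ s \fire{a'} s^* \land s^* \fire{a} s''}, \\
f_{a'}(f_a(S)) &= \set{s'' \colon \exists s \in S, s^*.\ s \fire{a} s^* \land s^* \fire{a'} s''}.
\end{align*}
For any $s'' \in f_a(f_{a'}(S))$ with witnesses $s \in S$ and $s^*$, the preservation lemma gives $s \in \reach M$, so $a$ and $a'$ commute at~$s$. A symmetric application of the first commutativity clause to $s \fire{a'} s^*$ together with $a \in \enabl{s^*}$ yields $a \in \enabl{s}$, hence $a, a' \in \enabl s$, and the diamond clause transports $s''$ to a matching execution $s \fire{a.a'} s''$, placing $s''$ in $f_{a'}(f_a(S))$. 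The reverse inclusion is symmetric, and the union of the two directions yields the desired equality.

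The main obstacle is that the second commutativity clause, as literally stated, only guarantees existence of \emph{one} shared successor, whereas the set equality demands that every $s''$ produced via $a \cdot a'$ be matched by an $a' \cdot a$ execution ending at the same $s''$, and conversely. In a deterministic setting the two readings coincide; for the non-deterministic setting I would adopt the standard diamond reading over individual successors (implicit throughout the POR literature and consistent with how commutativity is invoked elsewhere in the paper), under which the transport step above goes through unchanged.
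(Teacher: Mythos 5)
Your proof is correct and follows essentially the same route as the paper's: reduce to the commutativity equation, argue element-wise over the states in a reachable set $S$, use the fact that each such state is a reachable state of~$M$ (which the paper merely asserts and you rightly prove by induction on the generating transformer sequence), and transport each witness run across the diamond to get both inclusions. The subtlety you flag about the non-deterministic reading of the second commutativity clause is real --- the paper's own proof silently adopts the per-successor diamond reading when it writes ``both orderings reach~$s'$'' --- so your explicit treatment of it is a point in your favour rather than a gap.
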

We now show that independence and abstraction are distinct notions in
that transformers that are independent in a concrete domain may
not be independent in the abstract, and those that are not independent
in the concrete may become independent in the abstract.

Consider an analysis instance $\bar\ddom \eqdef \tup{\bar D, \barsqsub, \bar F, \bar d_0}$
that is an abstraction of $\ddom \eqdef \tup{D, \sqsubseteq, F, d_0}$
and a weak independence ${\indep} \subseteq F \times F$.
The \emph{inherited relation} 
$\bar \indep \subseteq \bar F \times \bar F$
contains  $\tup{\bar f, \bar f'}$ iff $\tup{f, f'}$ is in $\indep$.

\begin{example}[Abstraction breaks independence]
\label{ex:abstraction.breaks}
Consider a system~$M$ 
with the initial state $\tup{x \mapsto 0,y \mapsto 0}$,
and two threads $t_1: \texttt{x = 2}$, $t_2: \texttt{y = 7}$.
Let $\idom$ be the domain for interval analysis
with elements $\tup{i_x, i_y}$ being intervals for values of $x$ and
$y$.
The initial state is~$\bar d_0 = \tup{x \mapsto [0,0], y \mapsto [0,0]}$.
Abstract transformers for $t_1$ and $t_2$ are shown below.
These transformers are deliberately imprecise to highlight that
sound transformers are not the most precise ones.
\begin{align*}
f_1(\tup{i_x,i_y}) & = \tup{[2,4], i_y}
 & 
f_2(i_x,i_y) = \tup{i_x, ~(\text{if~} 3 \in i_x \text{ then } [7,9] \text{ else } [6,8])}
\end{align*}
The relation ${\indep} \eqdef \set{(t_1, t_2), (t_2, t_1)}$ is an
independence on~$M$, and when lifted to $\indep_{\ccdom M}$ is a weak
independence on $\ccdom M$  (in fact, $\indep_{\ccdom M}$ is an independence).
However, the relation $\indep_{\idom}$ is not a weak independence
because $f_1$ and~$f_2$ do not commute at~$d_0$, 
due to the imprecision introduced by abstraction.
Consider the statements ~$\texttt{assume(x != 9)}$
and~$\texttt{assume(x < 10)}$ applied to ~$\tup{x \mapsto [0,10]}$ to
see that even best transformers may not commute.
\end{example}

On the other hand, even when certain transitions are not independent, their
transformers may become independent in an abstract domain.

\begin{example}[Abstraction creates independence]
\label{ex:abstraction.creates}
Consider two threads $t_1: \texttt{x = 2}$ and $t_2: \texttt{x = 3}$, 
with abstract transformers $f_1(i_x)  = [2,3]$ and $f_2(i_x) = [2,3]$.
The transitions $t_1$ and $t_2$ do not commute, but owing to
imprecision, $R =\allowbreak \set{(f_1, f_2),\allowbreak (f_2, f_1)}$ is a weak independence
on~$\idom$.
\end{example}

\section{Unfolding of an Abstract Domain with Independence}
\label{sec:unf}

This section shows that unfoldings, which have primarily been used to
analyze Petri nets, can be applied to abstract interpretation (represented
by vertical lines in~\cref{fig:overview}).
An abstract unfolding is an event structure in which an event is recursively
defined as the application of a transformer after a minimal set of interferring
events; and a configuration represent equivalent sequences of transformer
applications (events).
Analogous to an invariant map in abstract interpreters and an abstract
reachability tree in software model checkers, our abstract unfolding
allows for constructing an over-approximation of the set of firable
transitions in a program.

\subsection{The Unfolding of a Domain}
\label{sec:unf.def}

Our construction generates a PES 
$\les \eqdef \tup{E, <, {\cfl}, h}$.
Recall that a configuration is a set of events that is closed with
respect to $<$ and that is conflict-free.
Events in $\les$ have the form $e = \tup{f,C}$,
representing that the transformer $f$ is applied after the
transformers in configuration $C$ are applied.
The order in which transformers must be applied is given by~$<$,
while $\cfl$ encodes transformer applications that cannot belong to the same
configuration.

The unfolding $\unf{\ddom,\indepx}$ of an analysis instance
$\ddom \eqdef \tup{D, \sqsubseteq, F, d_0}$
with respect to a relation~${\indepx} \subseteq F \times F$
is defined inductively below.
Recall that a configuration $C$ generates a set of interleavings
$\inter C$, which define the \emph{state} of the configuration.
\begin{align*}
\state C & \eqdef \bigsqcap_{\sigma \in \inter C} \state \sigma
\end{align*}
If $\indepx$ is a weak independence relation, all interleavings 
lead to the same state.
\begin{definition}[Unfolding]
\label{def:unf}
The {unfolding} $\unf{\ddom,\indepx}$ of~$\ddom$ under the
relation~$\indepx$ is the structure returned by the following procedure:
\begin{enumerate}
\item
  Start with a \pes $\les \eqdef \tup{E, <, {\cfl}, h}$
  equal to $\tup{\emptyset, \emptyset, \emptyset, \emptyset}$.
\item
 Add a new event $e \eqdef \tup{f,C}$ to~$E$, where
   the configuration $C \in \conf \les$ and transformer $f$ satisfy
   that $f$ is enabled at $\state C$, and
    $\lnot (f \indepx h(e))$ holds for every $<$-maximal event $e$ in~$C$.
\item
  Update $<$, $\cfl$, and $h$ as follows:
  \begin{itemize}
  \item
    for every $e' \in C$, set $e' < e$;
  \item
    for every $e' \in E \setminus C$, if $e \ne e'$ and $\lnot (f \indepx h(e'))$,
    then set $e' \cfl e$;
  \item
    set $h(e) \eqdef f$.
  \end{itemize}
\item
  Repeat steps 2 and 3 until no new event can be added to~$E$;
  return $\les$.
\end{enumerate}
\end{definition}

\cref{def:unf} 
defines the events, the causality, and conflict relations
of $\unf{\ddom,\indepx}$ by means of a saturation procedure.
Step 1 creates an empty \pes.
Step 2 defines a new event from a transformer~$f$ that can be applied
after configuration~$C$.
Step~3 defines~$e$ to be a causal successor of every
dependent event in~$C$, and defines $e$ to be in conflict 
with dependent events not in $C$.
Since conflicts are inherited in a \pes, causal successors of $e$ will
also be in conflict with all $e'$ satisfying $e \cfl e'$.
Events from $E \setminus C$, which are unrelated to~$f$ in~$\indepx$,
will remain concurrent to~$e$.

\begin{restatable}{proposition}{unfunique}
The structure $\unf{\ddom,\indep}$ generated by \cref{def:unf} is a uniquely
defined~\pes.
\label{prop:unf-unique}
\end{restatable}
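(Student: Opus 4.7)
The plan is to prove both assertions — that the returned structure is a PES and that it is independent of the order in which step~2 of \cref{def:unf} is applied — by a single induction on the size of event histories. Define the \emph{rank} of an event $e = \tup{f,C}$ to be $|C|$. At rank $0$, the admissible events are exactly the pairs $\tup{f, \emptyset}$ with $f$ enabled at $d_0$ (the independence guard is vacuously satisfied), and this set is fixed by $\ddom$ alone. Assuming the claim holds up to rank $n$, every rank-$(n{+}1)$ event has the form $\tup{f, C}$ with $C$ a configuration whose members all have rank at most $n$ (by causal closure, $|[e']| \le |C|$ for every $e' \in C$). By the induction hypothesis the set of candidate $C$, together with the restrictions of $<$ and $\cfl$ to them, is already fixed, and the admissibility conditions of step~2 — $C \in \conf \les$, $f$ enabled at $\state C$, and $\lnot (f \indep h(e'))$ for each $<$-maximal $e' \in C$ — depend only on $\ddom$, $\indep$, $C$, and $f$. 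Hence the set of rank-$(n{+}1)$ events, and by union the whole $E$, is uniquely determined, independently of the saturation schedule.

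With $E$ fixed, step~3 prescribes $<$, $\cfl$, and $h$ deterministically from each event $e = \tup{f, C}$, so these relations are unique as well. I then verify the PES axioms. Irreflexivity of $<$ follows because $e \notin C$ (events in $C$ have strictly smaller rank), and transitivity follows from causal closure of configurations: if $e'' < e' \in C$ then $e'' \in C$ and so $e'' < e$. Symmetry of $\cfl$ is immediate from reading the ``set $e' \cfl e$'' rule symmetrically, and irreflexivity is enforced by the explicit guard $e \ne e'$. The axiom of finite causes is immediate, since $\causes e = C$ is a finite configuration by construction.

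The principal obstacle is the axiom of hereditary conflict, since step~3 emits conflicts only between events whose labels are $\indep$-dependent, so direct conflicts alone need not be inherited. Following the paper's remark that conflicts are inherited in a PES, I take the PES conflict relation to be the closure under heredity of the direct conflicts emitted in step~3. To see that this closure is compatible with the procedure, suppose $e_1 \cfl e_2$ holds directly and $e_2 < e_3 = \tup{f_3, C_3}$: then $e_2 \in C_3$, and conflict-freeness of $C_3$ under the closed relation forces $e_1 \notin C_3$, so the inherited conflict $e_1 \cfl e_3$ never invalidates any configuration actually chosen at step~2. Irreflexivity survives closure, because a hereditary self-conflict $e \cfl e$ would require a direct conflict $e_1 \cfl e_2$ with $e_1, e_2 \le e$, yielding $\set{e_1, e_2} \subseteq [e]$ and contradicting the fact that $[e]$ is a configuration. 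Uniqueness of $\cfl$ after closure then follows from uniqueness of the direct conflicts and of $<$, completing the proof.
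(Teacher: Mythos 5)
Your proof is correct, but it takes a genuinely different route from the paper's. The paper does almost no work inside the proof of this proposition: it first establishes (in a separate lemma) that the set of unfolding prefixes ordered by $\ispref$ forms a complete join-semilattice whose join is the union operator $\union\cdot$, and then observes that the saturation procedure computes the least upper bound of all prefixes, which is unique by lattice theory; the PES axioms are dismissed as trivial. You instead give a direct, self-contained induction on the rank $|C|$ of an event $\tup{f,C}$, showing that the set of admissible events at each rank is determined by $\ddom$ and $\indep$ alone, and then you verify the PES axioms explicitly --- including the two points the paper glosses over, namely that the conflict relation must be taken as the hereditary closure of the direct conflicts emitted in step~3, and that this closure neither changes the set of configurations (so step~2 is well defined relative to either relation) nor introduces self-conflicts (since $[e]$ is conflict-free for the direct relation). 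Your argument is more elementary and more informative about \emph{why} the axioms hold. What the paper's lattice-of-prefixes machinery buys is a clean treatment of the infinite case: when the unfolding is infinite, ``repeat until no new event can be added'' never terminates, and a particular (unfair) schedule could fail to saturate; defining $\unf{\ddom,\indep}$ as the $\ispref$-lub of all prefixes sidesteps this, whereas your induction characterizes the saturated structure and implicitly identifies the output of the procedure with it. That is a presentational difference rather than a gap, but it is worth being explicit that uniqueness is a statement about the maximal admissible structure, not about the limit of an arbitrary run of the loop.
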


%

If $\indepx$ is a weak independence,
every configuration of $\unf{\ddom,\indepx}$ represents sequences of
transformer applications \emph{that produce the same element}.
If $C$ is a configuration that is local,
meaning it has a unique maximal event, or if $C$ is generated by an
independence, then $\state C$ will not be $\bot$.
Treating transformers as independent if they generate $\bot$ enables
greater reduction during analysis.
\begin{restatable}[Well-formedness of $\unf{\ddom,\indep}$]{theorem}{unfsound}
\label{thm:unf.sound}
Let $\indep$ be a \emph{weak} independence on~$\ddom$,
let $C$ be a configuration of $\unf{\ddom,\indep}$ 
and $\sigma, \sigma'$ be interleavings of~$C$. Then:
\begin{enumerate}
\item
  $\state \sigma = \state{\sigma'}$;
\item
  $\state \sigma \ne \bot$ when
  $\indep$ is additionally an independence relation;
\item
  If $C$ is a local configuration, then also $\state \sigma \ne \bot$.
\end{enumerate}
\end{restatable}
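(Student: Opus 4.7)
The plan is to prove the three parts simultaneously by induction on $|C|$, prefaced by a structural ``swap lemma'': any two events $e, e'$ of $\unf{\ddom, \indep}$ that are $<$-incomparable and not in conflict satisfy $h(e) \indep h(e')$. This is immediate from Step~3 of \cref{def:unf}: whichever of $e, e'$ was introduced later, say $e'$, must have $e \notin \causes{e'}$ (otherwise $e < e'$) and $\lnot (e \cfl e')$, so Step~3 is forced to place $h(e) \indep h(e')$.

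For Part~1, by induction on $|C|$, I would show that any two interleavings $\sigma, \sigma'$ of $C$ are connected by a finite sequence of transpositions of adjacent pairs $(e_i, e_{i+1})$ of events that are $<$-incomparable in $C$ --- a standard fact about linear extensions of finite posets. By the swap lemma each such pair is $\indep$-related. The common prefix state $d$ at which the two transformers act is either $\bot$ (in which case both orderings yield $\bot$ by bottom-strictness) or non-$\bot$ and therefore reachable, so weak independence yields commutation. Hence every transposition preserves $\state{\cdot}$ and Part~1 follows.

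For Part~3, if $C = [e]$ is local then $e$ is its unique $<$-maximal element, so every interleaving factors as $\tau \cdot f$ with $f = h(e)$ and $\tau \in \inter{\causes e}$. By Part~1 applied to the smaller configuration $\causes e$, $\state \tau = \state{\causes e}$. Step~2 of \cref{def:unf} requires $f$ to be enabled at $\state{\causes e}$, which by bottom-strictness forces $\state{\causes e} \neq \bot$ and hence $\state \sigma = f(\state{\causes e}) \neq \bot$. For Part~2, assume $\indep$ is a full independence. I induct on $|C|$; the base case is immediate. For the step, pick a $<$-maximal $e \in C$ with $f = h(e)$ and set $C' \eqdef C \setminus \set{e}$; by IH $\state{C'} \neq \bot$, and by Part~1 $\state C = f(\state{C'})$. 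Using Part~1, I evaluate $\state{C'}$ along an interleaving that enumerates $\causes e$ first and then the events $d_1, \ldots, d_k$ of $D \eqdef C' \setminus \causes e$ in some topological order. By the swap lemma, $h(d_i) \indep f$ for every $i$, since each $d_i$ is $<$-incomparable with $e$ and not in conflict with $e$. Starting from $\state{\causes e}$, where $f$ is enabled by construction, I transport enabledness of $f$ across each $h(d_i)$ in turn by combining both clauses of \cref{def:ind-trans}: if $f(d) \neq \bot$, $h(d_i)(d) \neq \bot$ (which holds for every prefix state along the run to $\state{C'}$, by bottom-strictness), and $h(d_i) \indep f$, then the non-disabling clause yields $h(d_i)(f(d)) \neq \bot$, and commutation rewrites this as $f(h(d_i)(d)) \neq \bot$. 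After $k$ iterations we obtain $f(\state{C'}) \neq \bot$, as required.

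The main obstacle is the bookkeeping in Part~2: one must deploy both clauses of \cref{def:ind-trans} in concert and select the specific interleaving of $C'$ that places $\causes e$ first, so that the invariant ``$f$ is enabled at the current state'' can be propagated step by step along the suffix enumerating $D$. Parts~1 and~3 are then relatively direct consequences of the swap lemma and of the enabledness requirement imposed by Step~2.
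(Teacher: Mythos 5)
Your proposal is correct, and for Parts~1 and~3 it follows essentially the same route as the paper: Part~1 by induction on $|C|$, exploiting the fact (your ``swap lemma'', which the paper uses implicitly when it asserts that a maximal event's label is independent of everything that follows it in an interleaving) that causally unordered, non-conflicting events must carry $\indep$-related labels by Step~3 of \cref{def:unf}; and Part~3 by observing that a local configuration has a unique maximal event, so every interleaving ends in $h(e)$ and enabledness of $h(e)$ at $\state{\causes e}$ (Step~2) gives non-$\bot$. Your Part~1 phrases the induction via adjacent transpositions connecting two linear extensions rather than the paper's ``push the maximal event to the back'' rewriting, but these are the same argument; your explicit handling of the $\bot$ case via bottom-strictness is a point the paper glosses over. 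The genuine difference is Part~2: the paper simply cites an external lemma from prior work, whereas you give a self-contained proof. Your argument there is sound --- choosing the interleaving of $C' = C \setminus \set{e}$ that enumerates $\causes e$ first, establishing enabledness of $f = h(e)$ at $\state{\causes e}$ from Step~2, and then propagating $f(d) \ne \bot$ across each subsequent transformer using the non-disabling clause of \cref{def:ind-trans} together with the fact that all intermediate states along a run to the non-$\bot$ element $\state{C'}$ are themselves non-$\bot$ and reachable. This is exactly the kind of argument the cited lemma encapsulates, and making it explicit is a strength rather than a gap.
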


\cref{thm:unf.complete} shows that the unfolding is adequate for
analysis in the sense that every sequence of transformer applications
leading to non-$\bot$ elements that could be generated during standard
analysis with a domain will be contained in the unfolding.
We emphasize that these sequences are only symbolically represented.

\begin{restatable}[Adequacy of $\unf{\ddom,\indep}$]{theorem}{unfcomplete}
\label{thm:unf.complete}
For every {weak} independence relation $\indep$ on~$\ddom$,
and sequence of transformers $\sigma \in \runs \ddom$,
there is a unique configuration~$C$ of
$\unf{\ddom,\indep}$ such that $\sigma \in \inter C$.
\end{restatable}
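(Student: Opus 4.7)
The plan is to proceed by induction on the length of $\sigma = f_1 \ldots f_n$. The base case $\sigma = \epsilon$ is immediate: $\state \epsilon = d_0 \ne \bot$, the empty set is vacuously a configuration, and $\inter \emptyset = \set\epsilon$, so $C = \emptyset$ works. Moreover any configuration whose interleavings include $\epsilon$ must be empty, giving uniqueness.

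For the inductive step, write $\sigma = \sigma' \cdot f_n$. Since $\sigma$ is a run, so is $\sigma'$, and by the induction hypothesis there is a unique configuration $C_{n-1}$ of $\unf{\ddom,\indep}$ with $\sigma' \in \inter{C_{n-1}}$; by \cref{thm:unf.sound}(1) we also have $\state{C_{n-1}} = \state{\sigma'}$. The key step is to construct the event for $f_n$. Let $D$ be the downward closure (under $<$) of the set of events $e \in C_{n-1}$ with $\lnot(f_n \indep h(e))$. Then $D$ is a sub-configuration of $C_{n-1}$ and, by construction, every $<$-maximal event of $D$ is dependent with $f_n$ (any independent event in $D$ was added only as a cause of a dependent one, hence has a successor in $D$).

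Next I would argue that $f_n$ is enabled at $\state D$. Pick an interleaving of $C_{n-1}$ of the form $\tau_1 \tau_2$ where $\tau_1$ enumerates $D$ and $\tau_2$ enumerates $C_{n-1} \setminus D$; such an interleaving exists because $D$ is causally closed. All transformers in $\tau_2$ are independent with $f_n$, so repeated commutation (weak independence on reachable elements) yields $f_n(\state{\tau_1 \tau_2}) = \tau_2(f_n(\state{\tau_1}))$. The left-hand side is non-$\bot$ because $\sigma$ is a run, and bottom-strictness of transformers then forces $f_n(\state D) = f_n(\state{\tau_1}) \ne \bot$. Consequently $e_n \eqdef \tup{f_n, D}$ satisfies the addition condition of \cref{def:unf} and belongs to the unfolding. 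Setting $C_n \eqdef C_{n-1} \cup \set{e_n}$: causal closure is clear since $\causes{e_n} \subseteq D \subseteq C_{n-1}$; conflict-freeness holds because the only events conflicting with $e_n$ are those outside $D$ labelled by transformers dependent with $f_n$, and by definition $D$ contains all such events of $C_{n-1}$. Finally $\sigma \in \inter{C_n}$ since $e_n$ is maximal in $C_n$.

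For uniqueness, suppose $C$ is any configuration with $\sigma \in \inter C$. The event $e_n'$ corresponding to the last position is $<$-maximal in $C$, and $C \setminus \set{e_n'}$ is a configuration whose interleavings include $\sigma'$; by the induction hypothesis $C \setminus \set{e_n'} = C_{n-1}$. It remains to show $e_n' = e_n$, which reduces to showing $\causes{e_n'} = D$. Write $e_n' = \tup{f_n, D'}$ with $D' \subseteq C_{n-1}$. The inclusion $D \subseteq D'$ follows because $D'$ must contain every event of $C_{n-1}$ dependent with $f_n$ (otherwise such an event would be in conflict with $e_n'$, contradicting $C \in \conf{\unf{\ddom,\indep}}$) and is causally closed. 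For the converse, if $D' \supsetneq D$, pick a $<$-maximal element $e$ of $D' \setminus D$: $e$ is independent with $f_n$ (else $e \in D$), and a short argument using causal closure of $D$ shows $e$ has no successor in $D'$, hence is maximal in $D'$, contradicting the addition condition of \cref{def:unf}. The main obstacle is precisely this uniqueness argument, since it requires carefully ruling out ``redundant'' independent predecessors that the saturation procedure might superficially appear to allow.
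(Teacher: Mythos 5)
Your proof is correct and follows essentially the same route as the paper's: induction on $|\sigma|$, constructing the history of the new event by restricting $C_{n-1}$ to the events dependent with $f_n$ (your closed-form downward closure is exactly the fixed point of the paper's iterative removal of maximal independent events), establishing enabledness by commuting $f_n$ past the independent suffix using weak independence, and forcing uniqueness of the history via the conflict rule and the maximality condition of \cref{def:unf}. The only differences are presentational (you start the induction at the empty run, and your two-inclusion uniqueness argument is if anything slightly more careful than the paper's).
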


\subsection{Abstract Unfoldings}
\label{sec:abs-unf}

The soundness theorems of abstract interpretation show when a fixed
point computed in an abstract domain soundly approximates fixed points
in a concrete domain.
Our analysis constructs unfoldings instead of fixed points.
The soundness of our analysis \emph{does not} follow
from fixed point soundness because the abstract unfolding we construct
depends on the independence relation used.
Though independence may not be preserved under lifting,
as shown in \cref{ex:abstraction.breaks},
lifted relations can still be used to obtain sound results.

\begin{example}
In \exref{abstraction.breaks},
the transformer composition $f_1 \comp f_2$
produces $\tup{x \mapsto [2,4],\allowbreak y \mapsto [6,8]}$,
while $f_2 \circ f_2$ produces
$\tup{x \mapsto [2,4],\allowbreak y \mapsto [7,9]}$.
If $f_1$ and $f_2$ are considered independent, 
the state of the configuration $\set{f_1, f_2}$
is 
$\state {f_1,f_2}\meet \state {f_2, f_1}$,
which is the abstract element
$\tup{x \mapsto [2,4], y \mapsto [7,7]}$
and contains the final state
$\tup{x \mapsto 2, y \mapsto 7}$ reached in the concrete. 
\label{ex:soundness.with.nonindependence}
\end{example}

Thus, with sound abstractions of (weakly) independent, concrete transformers,
can be treated as independent without compromising soundness of the
analysis. 
The soundness theorem below asserts a correspondence between sequences
of concrete transformer applications and the abstract unfolding.
The concrete and abstract objects in \cref{thm:absunf.correct} have
different type: we are not relating a concrete unfolding with an
abstract unfolding, but concrete transformer sequences with abstract
configurations.
Since $\state C$ is defined as a meet of transformer sequences,
the proof of \cref{thm:absunf.correct} relies on the
independence relation
and has a different structure 
from standard proofs of fixed point soundness from transformer
soundness.

\begin{restatable}[Soundness of the abstraction]{theorem}{absunfcorrect}
\label{thm:absunf.correct}
Let $\bar \ddom$ be a sound abstraction  of the analysis instance $\ddom$, 
let $\indep$ be a weak independence on $\ddom$,
and $\bar \indep$ be the lifted relation on $\bar \ddom$.
For every sequence $\sigma \in \runs \ddom$ 
satisfying $\state \sigma \neq \bot$,
there is a unique configuration~$C$ of
$\unf{\bar\ddom,\barindep}$ such that $m(\sigma) \in \inter C$.
\end{restatable}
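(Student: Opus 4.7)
The plan is to induct on the length of $\sigma$. The base case is $\sigma = \varepsilon$, witnessed by the empty configuration of $\unf{\bar\ddom,\barindep}$. For the inductive step, write $\sigma = \sigma' \cdot f$ and $\bar f \eqdef m(f)$; bottom-strictness of $f$ gives $\state{\sigma'} \neq \bot$, so the induction hypothesis yields a unique $C' \in \conf{\unf{\bar\ddom, \barindep}}$ with $m(\sigma') \in \inter{C'}$.

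I would then define a candidate new event $e \eqdef \tup{\bar f, C}$, where $C$ is the causal closure in $C'$ of the set $D \eqdef \set{e' \in C' \colon \lnot(\bar f \barindep h(e'))}$. The structural obligations of \cref{def:unf} follow by construction: $C$ is a configuration (causally closed by definition, conflict-free as a subset of $C'$); every $<$-maximal event of $C$ lies in $D$, hence is $\barindep$-dependent with $\bar f$; every $e'' \in C' \setminus C$ is $\barindep$-independent with $\bar f$ (otherwise $e'' \in D \subseteq C$), so $e$ is not in conflict with any event of $C'$; and $m(\sigma) = m(\sigma') \cdot \bar f$ is an interleaving of $C^\star \eqdef C' \cup \set{e}$ because the strict predecessors of $e$ are exactly the events of $C \subseteq C'$. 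Uniqueness of $C^\star$ follows from uniqueness of $C'$ plus the observation that any valid history for $e$ must contain $D$, be causally closed, and have its $<$-maximal events in $D$, forcing it to equal the causal closure of $D$.

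The main obstacle is the remaining enabledness condition, $\bar f(\state C) \neq \bot$. Because $\barindep$ need not be a weak independence on $\bar\ddom$ (\cref{ex:abstraction.breaks}), the meet $\state C = \bigsqcap_{\tau \in \inter C}\state\tau$ may lie strictly below every individual $\state\tau$, so enabledness does not transfer directly from the concrete. I would attack this via a bridging lemma, proved by its own induction on concrete run length: for every configuration $K$ of $\unf{\bar\ddom, \barindep}$ and every $\rho \in \runs\ddom$ with $m(\rho) \in \inter K$, one has $\state\rho \sqsubseteq \gamma(\state K)$. The key observation is that, by the conflict-creation rule of \cref{def:unf}, two concurrent events within a configuration are forced to be $\barindep$-independent; hence any $\tau \in \inter K$ is reachable from $m(\rho)$ by repeated adjacent swaps of $\barindep$-independent pairs, and each such swap, lifted via $m^{-1}$, exchanges a pair of $\indep$-independent concrete transformers and therefore preserves $\state\rho$ by the concrete weak independence. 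Applying $f \comp \gamma \sqsubseteq \gamma \comp \bar f$ along $\tau$ then gives $\state\rho \sqsubseteq \gamma(\state\tau)$ for every $\tau$, and meet-preservation of $\gamma$ (standard when $\gamma$ is the right adjoint of a Galois connection) upgrades this to $\state\rho \sqsubseteq \gamma(\state K)$.

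To close the induction, I would apply the bridging lemma to $K = C^\star$ and $\rho = \sigma$: since $\state\sigma \neq \bot$ and $\gamma(\bot) = \bot$, we get $\state{C^\star} \neq \bot$, whence $\state{\tau'} \neq \bot$ for every $\tau' \in \inter{C^\star}$ (a meet collapses to $\bot$ the moment a single element is). Choosing $\tau' = \tau_C \cdot \bar f \cdot \tau_R$ with $\tau_C \in \inter C$, which exists because $C$ is exactly the set of causes of $e$, bottom-strictness along $\tau_R$ forces $\bar f(\state{\tau_C}) \neq \bot$ for every $\tau_C \in \inter C$. The subtlest final step is to pass from this pointwise non-triviality to $\bar f(\state C) = \bar f(\bigsqcap \state{\tau_C}) \neq \bot$. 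I expect to handle it by a second invocation of the bridging lemma applied to $K = C$ with a concrete witness $\rho^\star$ obtained by repeatedly swapping $f$ leftward in $\sigma$ past the $C' \setminus C$ events (each $\indep$ with $f$ in the concrete), isolating a concrete prefix that projects to an interleaving of $C$; transformer soundness then transfers the concrete enabledness of $f$ at $\state{\rho^\star}$ to $\bar f(\state C) \neq \bot$, establishing the enabledness of $e$ and completing the induction.
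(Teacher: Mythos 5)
Your proof follows essentially the same route as the paper's: induction on the run length, construction of the history for $\bar f$ by discarding the $\barindep$-independent part of $C'$ (your causal closure of the dependent set $D$ coincides with the paper's iterative removal of independent maximal events), and the key ``bridging'' step $\state\rho \sqsubseteq \gamma(\state K)$ obtained by swapping concretely independent transformers, applying $f \comp \gamma \sqsubseteq \gamma \comp \bar f$ along each interleaving, and using meet-semipreservation of $\gamma$ --- which is precisely the paper's central computation. One remark: your first invocation of the bridging lemma on $K = C^\star$ is both circular (it treats $e$ as an event of the unfolding before its enabledness is established) and, as you note, insufficient; it can simply be dropped, since your second invocation on $K = C$ with the leftward-swapped concrete witness $\rho^\star$ is the correct and complete argument, and is exactly what the paper does.
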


\thmref{absunf.correct} and \thmref{unf.complete} are fundamentally
different.
\thmref{unf.complete} shows that an unfolding parameterized by a weak
independence relation is a data structure for representing all sequences
of transformer applications that may be generated during analysis
within a domain.
\thmref{absunf.correct} shows that every concrete sequence of transformers
has a corresponding sequence of abstract transformers. 
However, the abstract unfolding in \thmref{absunf.correct} may not
represent all transformer applications of the abstract domain in
isolation.
Formally, 
let $\ispref$ be the order between unfolding prefixes and
$m(\unf{\ddom,\indep})$ is a lifting of an unfolding over a concrete
domain to an abstract domain, 
we have
$m(\unf{\ddom,\indep}) ~ \ispref ~\; \unf{\bar\ddom,\barindep}$.
In fact, every configuration of~$\unf{\ddom,\indep}$
will be isomorphic to a configuration in~$\unf{\bar\ddom,\barindep}$.
%
\section{Plugging Thread-Local Analysis}
\label{sec:collapse}

Unfoldings compactly represent concurrent executions using partial orders.
However, they are a branching structure and one extension of the 
unfolding can multiply the number of branches, leading to 
a blow-up in the number of branches. 
Static analyses of sequential programs often avoid this explosion
(at the expense of precision) by over-approximating (using join or 
widening) the abstract state at the CFG locations where two or 
more program paths converge.
Adequately lifting this simple idea of merging at CFG locations
from sequential to concurrent programs is a highly non-trivial problem~\cite{FM07}.

In this section, we present a method that addresses this challenge
and can mitigate the blow-up in the size of the unfolding caused
by conflicts between events of the same thread.
The key idea of our method is to merge abstract states
generated by statements that work on local data of one thread, \ie, those 
whose impact over the memory/environment is
invisible to other threads.
Intuitively, the key insight is that we can 
merge certain configurations of the unfolding and 
still preserve its structural properties 
with respect to interference.
The state of the resulting configuration will be
a sound over-approximation of the states of the
merged configurations at no loss of precision
with respect to conflicts between events of 
different threads.

Our approach is to analyse~$M$ by constructing the 
unfolding of an abstract domain~$\ddom$ and a
weak independence relation~${\indep}$
using a thread-local procedure 
that over-approximates the effect of transformers altering local variables.

Assume that~$M$ has~$n$ threads.
Let $F_1, \ldots, F_n$ be the partitioning of the set of transformers~$F$ by the
thread to which they belong.
For $f \in F_i$, we let~$p(f) \eqdef i$ denote the thread
to which~$f$ belongs.
We define, per thread, the (local) transformers which can
be used to run the merging analysis.
%
A~transformer~$f \in F_i$ is \emph{local} when, for all other threads~$j \ne i$
and all transformers $f' \in F_j$ we have $f \indep f'$.
A~transformer is \emph{global} if it is not local.
We denote by~$F_i^\text{loc}$ and~$F_i^\text{glo}$, respectively,
the set of local and global transformers in~$F_i$.
In~\cref{fig:forfor}~(a), the global transformers would be those
representing the actions to the variable~\verb!g!.
The remaining statements correspond to local transformers.

We formalize the thread-local analysis using the function
$\collapsee{} \colon \nat \times D \to D$, which plays the 
role of an off-the-shelf static analyzer for sequential
thread code.
A~call to $\collapse{i, d}$ will
run a static analyzer on thread~$i$, restricted to~$F_i^\text{loc}$,
starting from~$d$, and return its result which we assume is a sound fixed point.
Formally, we assume that $\collapse{i,d}$ returns $d' \in D$,
such that for every sequence~$f_1 \ldots f_n \in (F_i^\text{loc})^*$ 
we have~$(f_n \comp \ldots \comp f_1)(d) \sle d'$.
This condition requires any implementation of~$\collapse{i,d}$ to return a sound
approximation of the state that thread~$i$ could possibly reach
after running only local transformers starting from~$d$.

\SetKwProg{Proc}{Procedure}{}{}
\SetKwIF{If}{ElseIf}{Else}{if}{}{else if}{else}{end}
\SetKwFor{ForEach}{foreach}{}{end}
\SetKwFor{ForAll}{forall}{}{end}
\SetKwFor{For}{for}{}{end}
\SetInd{-0.1em}{0.7em}
\SetNlSty{}{\color{gray}}{}
\SetVlineSkip{0em}
\SetKw{Continue}{continue}
\SetKw{In}{in}
\SetKwRepeat{Do}{do}{while}

\SetKwFunction{unfold}{unfold}
\SetKwFunction{iscutoff}{iscutoff}
\SetKwFunction{ccollapse}{tla}
\SetKwFunction{mkevent}{mkevent}

\begin{algorithm}[t]
\DontPrintSemicolon
\setstretch{1.1}

\begin{multicols}{2}
\Proc {\unfold {$\ddom, {\indep}, n$}}
{
   Set $\les \eqdef \tup{E,<,{\cfl},h}$ to $\tup{\emptyset,\emptyset,\emptyset,\emptyset}$
   
   \ForAll {$i,C$ \In $\nat_n \times \conf \les$}
   {
      \For {$f$ enabled on $\ccollapse {$i, \state C$}$}
      {
         $e \eqdef \mkevent{$f, C, {\indep}$}$\\
         \lIf {\iscutoff {$e, \les$}} {\Continue}
         Add $e$ to $E$\\
         Extend $<$, $\cfl$, and $h$ with $e$.
      }
   }
}

\Proc {\mkevent {$f, C, {\indep}$}}
{
   \Do
   {$C$ changed   
   } 
   {
   Remove from $C$ any $<$-maximal event $e$ such that
   $f \indep h(e)$
   }
   \Return $\tup{f,C}$
}
\end{multicols}

\medskip
\caption{Unfolding using thread-local fixpoint analysis}
\label{a:a1}
\end{algorithm}

\cref{a:a1} presents the overall approach proposed in this paper.
Procedure~\unfold builds an~\emph{abstract unfolding} for~$\ddom$ under
independence relation~$\indep$.
It non-deterministically selects a thread~$i$ and a configuration~$C$ and runs
a sequential static analyzer on thread~$i$ starting on the state reached by~$C$.
If a global transformer~$f \in F_i^\text{glo}$ is eventually enabled, the
algorithm will try to insert it into the unfolding.
For that it first calls the function~\mkevent that will generate a
an event, \ie an history for~$f$ from~$C$ according to~\cref{def:unf}.
%
%
If the new event~$e$ is a \emph{cutoff}, \ie an~\emph{equivalent} event
is already in the unfolding prefix, then it will be ignored.
Otherwise, we add it to~$E$.
Finally, we update relations $<$, $\cfl$, and $h$ using exactly the same
procedure as in Step~3 of~\cref{def:unf}.

We denote by~$\preff{\ddom,\indep}$ the PES constructed by a call
to~\unfold{$\ddom,\indep,n$}.
Events of $\preff{\ddom,\indep}$ are labelled by global transformers of~$\ddom$.
As a result, we adapt the definition of~$\state C$ to
account for the effects of~\ccollapse on a thread.
See~\cref{sec:app.collapse}.
When the~\ccollapse performs a path-insensitive analysis, the structure~$\pref{\ddom,\indep}$ is
(\emph{i})~path-insensitive for runs that execute only local code,
(\emph{ii})~partially path-sensitive for runs that execute one or more
global transformer, and
(\emph{iii})~flow-sensitive with respect to interference between threads.
We refer to this analysis as a~\textit{causally-sensitive} analysis as 
it is precise with respect to the \emph{dynamic} interference between threads.

\cref{a:a1} embeds multiple constructions explained in this paper.
For instance, when $\collapsee$ is implemented by the function $g(d,i) \eqdef d$
and the check of cutoffs is disabled (\iscutoff systematically returns
\textit{false}), the algorithm is equivalent to~\cref{def:unf}.
We now show that~$\preff{\ddom,\indep}$ is a safe abstraction of~$\ddom$
when~\ccollapse performs a non-trivial operation.

\begin{restatable}[Soundness of the abstraction]{theorem}{collapsecorrect}
\label{thm:collapse.correct}
Let $\indep$ be a weak independence on~$\ddom$ and
$\pref{\ddom,\indep}$ the PES computed by a call to
\unfold{$\ddom, \indep, n$} with cutoff checking disabled.
Then, for any execution $\sigma \in \runs \ddom$
there is a unique configuration~$C$ in
$\pref{\ddom,\indep}$ such that $\hat\sigma \in \inter C$.
\end{restatable}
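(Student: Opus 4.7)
The proof will mirror the structure of \thmref{unf.complete} (adequacy of $\unf{\ddom,\indep}$), but with the extra work concentrated in bridging between a concrete run of~$\ddom$ and a configuration whose events are labelled only by \emph{global} transformers. The plan is to induct on the number~$k$ of global transformer occurrences in $\sigma$. For the base case $k=0$, $\hat\sigma$ is the empty sequence and the empty configuration~$\emptyset \in \conf{\preff{\ddom,\indep}}$ witnesses the claim; uniqueness is immediate because any nonempty configuration has a $<$-minimal event whose label appears in every interleaving.

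For the inductive step I would decompose $\sigma = \alpha \cdot f \cdot \beta$ where $f$ is the last global transformer in~$\sigma$ and $\beta$ consists solely of local transformers (of any threads). Since $\sigma \in \runs\ddom$ and transformers are bottom-strict, $\alpha$ is itself a run of~$\ddom$ with $k-1$ global transformers, so the induction hypothesis gives a unique configuration $C' \in \conf{\preff{\ddom,\indep}}$ with $\hat\alpha \in \inter{C'}$. I then need to show four things: (i)~\unfold eventually fires on the pair $(p(f), C')$ and discovers that $f$ is enabled on $\ccollapse{p(f), \state{C'}}$; (ii)~the event $e = \mkevent(f, C', \indep)$ produced by Algorithm~1 is actually added to the prefix (here we use that cutoff checking is disabled); (iii)~$C' \cup \set e$ is a configuration of $\preff{\ddom,\indep}$; and (iv)~$\hat\sigma = \hat\alpha \cdot f$ belongs to $\inter{C' \cup \set e}$.

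For (i), the key point is that $f$ is enabled on the concrete state $\state{\alpha}$, and by the soundness specification of \ccollapse from Section~\ref{sec:collapse}, together with the collapse-aware definition of $\state{\cdot}$ referenced in \cref{sec:app.collapse}, the state $\state{\alpha}$ lies below $\ccollapse{p(f),\state{C'}}$ in~$\sqsubseteq$; monotonicity and bottom-strictness of~$f$ then promote $f(\state\alpha) \ne \bot$ to $f(\ccollapse{p(f),\state{C'}}) \ne \bot$. Point (ii) follows because with cutoffs disabled Algorithm~1 is essentially a saturation procedure that agrees with \cref{def:unf} on the restricted transformer set. Points (iii) and (iv) are then syntactic: \mkevent attaches $e$ above a minimal causal history $H \subseteq C'$ and sets it in conflict only with events outside $C'$, so $C' \cup \set e$ remains causally closed and conflict-free, and any interleaving of~$C'$ concatenated with~$f$ respects the new order. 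Uniqueness of the extended configuration reduces to uniqueness in the IH: any other witness $D$ must have a maximal event labelled by~$f$ with history determined (up to isomorphism) by \mkevent applied to $D \setminus \set{e_D}$, and the restriction $D \setminus \set{e_D}$ interleaves to $\hat\alpha$, so by IH equals $C'$.

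The main obstacle I expect is step~(i): making precise the sense in which the collapse-aware $\state{C'}$ over-approximates the state actually reached along the concrete prefix~$\alpha$, since~$\alpha$ may freely interleave local transformers of several threads while \ccollapse only explores local transformers thread by thread. I would discharge this by an auxiliary lemma on $\state{\cdot}$ (presumably proved in \cref{sec:app.collapse}) stating that, for every configuration~$C$ and every sequence of local transformers of a single thread~$i$ applied to any element below $\state C$, the resulting element remains below $\ccollapse{i, \state C}$; this is exactly the soundness assumption on~\ccollapse combined with monotonicity, and it makes the enabledness promotion in~(i) routine.
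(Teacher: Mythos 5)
Your proposal is correct and takes essentially the same route as the paper: the paper merely packages your argument by defining a ``collapsing domain'' $\hat\ddom$ with transformers $\hat f \eqdef f \comp \collapse{i}$, proving adequacy of $\unf{\hat\ddom,\hatindep}$ by induction on the run (where the local-transformer case is trivial, so it is in effect your induction on the number of global occurrences), and then observing that \unfold{} with cutoffs disabled computes exactly that unfolding. The crux you single out in step~(i) --- promoting enabledness of $f$ from $\state{\alpha}$ to $\collapse{p(f),\state{C'}}$ via the soundness specification of \ccollapse{} and the independence of local transformers across threads --- is precisely the point the paper's proof also rests on, so there is no genuine gap relative to the published argument.
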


\subsection{Cutoff Events: Pruning the Unfolding}
\label{sec:cutoffs}

If we remove the conditional statement in line~6 of~\cref{a:a1},
the algorithm would only terminate if every run of~$\ddom$ contains 
finitely many global transformers.
This conditional check has two purposes:
(1) preventing infinite executions from inserting infinitely many events
into~$\les$;
(2) pruning branches of the unfolding that start with~\emph{equivalent} events.
The procedure~\iscutoff decides when an event is marked as a~\emph{cutoff}~\cite{Mcm93}.
In such cases, no causal successor of the event will be explored.
The implementation of~\iscutoff cannot prune ``too often'', as we want the
computed PES to be a~\emph{complete} representation of behaviours of~$\ddom$
(\eg, if a transformer is fireable, then some event in the PES will be labelled
by it).

Formally, given~$\ddom$, a PES $\les$ is $\ddom$-\emph{complete} iff
for every reachable element $d \in \reach \ddom$ there is a configuration~$C$
of~$\les$ such that $\state C \sge d$.
The key idea behind cutoff events is that,
if event~$e$ is marked as a cutoff, then for any configuration~$C$ that
includes~$e$ it must be possible to find a configuration~$C'$ without cutoff
events such that $\state C \sle \state{C'}$. 
This can be achieved by defining~\iscutoff{$e, \les$} to be the predicate:
$
\exists e' \in \les \text{ such that } 
\state{[e]} \sle
\state{[e']} \text{ and }
|[e']| < |[e]|.
$
When such~$e'$ exists, including the event~$e$ in~$\les$ is unnecessary
because any configuration~$C$ such that~$e \in C$ can be replayed in~$\les$
by first executing~$[e']$ and then (copies of) the events in~$C \setminus [e]$.

We now would like to prove that \cref{a:a1} produces a $\ddom$-complete prefix
when instantiated with the above definition of~\iscutoff.
However, a subtle an unexpected interaction between the operators~\ccollapse
and~\iscutoff makes it possible to prove \cref{thm:a1complete} only when
\ccollapse \emph{respects independence}.
Formally, we require~\ccollapse to satisfy the following property:
for any $d \in \reach \ddom$ and any two global transformers
$f \in F_i^\text{glo}$ and
$f' \in F_j^\text{glo}$,
if $f \indep f'$ then
\[
(f' \comp \collapse{j} \comp
f \comp \collapse{i}) (d) =
(f \comp \collapse{i} \comp
f' \comp \collapse{j}) (d)
\]
When \ccollapse does not respect independence, it may
over-approximate the global state (\eg via joins and widening) in 
a way that breaks the independence of otherwise independent global transformers.
This triggers the cutoff predicate to incorrectly prune necessary events.

\begin{restatable}{theorem}{algoprefixcomplete}
\label{thm:a1complete}
Let $\indep$ be a weak independence in~$\ddom$.
Assume that \ccollapse respects independence and that
\iscutoff uses the procedure defined above.
Then the PES $\preff{\ddom,\indep}$ computed by \cref{a:a1} is $\ddom$-complete.
\end{restatable}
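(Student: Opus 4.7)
The plan is to adapt McMillan's classical completeness argument for cutoff-pruned unfoldings to our abstract, causally-sensitive setting. Let $\les_\infty$ denote the (possibly infinite) PES that would be produced by \unfold{$\ddom,\indep,n$} if \iscutoff always returned \textit{false}. By \cref{thm:collapse.correct}, every run $\sigma \in \runs\ddom$ corresponds to a unique configuration of $\les_\infty$ whose state dominates $\state\sigma$, so $\les_\infty$ is trivially $\ddom$-complete. The task is then to show that pruning $\les_\infty$ with the cutoff predicate to obtain $\preff{\ddom,\indep}$ preserves completeness.

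Following McMillan, I would fix a total, well-founded \emph{adequate order} $\prec$ on configurations of $\les_\infty$ (e.g.\ the size-lex order: first compare $|C|$, then break ties lexicographically using a fixed total order on events). The key property is that $\prec$ is preserved under concurrent extensions: if $[e] \prec [e']$ and $C$ is a configuration concurrent to both $[e]$ and $[e']$, then $[e] \cup C' \prec [e'] \cup C''$ where $C', C''$ are the corresponding ``copies'' of $C$ attached on top of each. Given any reachable $d \in \reach\ddom$, I then pick a $\prec$-minimal configuration $C^*$ of $\les_\infty$ with $\state{C^*} \sqsupseteq d$ and claim $C^*$ is cutoff-free, so that it also belongs to $\preff{\ddom,\indep}$.

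For the claim, suppose towards contradiction that $C^*$ contains a cutoff event $e$. By the cutoff predicate there is $e'$ with $|[e']| < |[e]|$ and $\state{[e']} \sqsupseteq \state{[e]}$. I would construct $C' \eqdef [e'] \cup \widetilde{(C^* \setminus [e])}$, where the tilde denotes the isomorphic copy of the suffix of $C^*$ re-attached on top of $[e']$. Two facts need verification: \emph{(a)}~$C'$ is a genuine configuration of $\les_\infty$, and \emph{(b)}~$\state{C'} \sqsupseteq \state{C^*} \sqsupseteq d$. Both hinge on the hypothesis that \ccollapse respects independence: this ensures that each pair of concurrent global transformers in $C^* \setminus [e]$ truly commutes even when the intervening \ccollapse-applications are taken into account, so the causal/conflict structure inherited from $\les_\infty$ when attaching the suffix on top of $[e']$ is well defined, and the resulting state is obtained by applying the same sequence of global transformers (each composed with its local collapse) to a larger starting element $\state{[e']}$; monotonicity of transformers and of \ccollapse then lifts $\sqsupseteq$ through the suffix. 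Finally $|[e']| < |[e]|$ together with the adequate-order property gives $C' \prec C^*$, contradicting minimality.

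The main obstacle is step~\emph{(a)} together with the commutation argument in~\emph{(b)}. Without the respects-independence assumption, \ccollapse can widen the global state in a way that fabricates spurious dependencies between two concurrent global transformers, breaking the isomorphism between the suffix of $C^*$ in $\les_\infty$ and its replay on top of $[e']$; worse, a transformer that is enabled on $\state{C^* \setminus \{e\}} \sqcup \state{[e]}$ might be disabled after the widened \ccollapse atop $[e']$, so the replay configuration might not even exist. The commutation equation in the hypothesis is precisely what rules this out, letting the standard McMillan argument go through unchanged once it is in place.
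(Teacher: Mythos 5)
Your proposal is correct and is, at its core, the same McMillan-style replay argument the paper uses; the one structural difference is in how the work is packaged. The paper's actual proof of \cref{thm:a1complete} is a two-line reduction: it shows that when \ccollapse respects independence, the lifted relation $\hatindep$ on the \emph{collapsing domain} $\hat\ddom$ (defined in \cref{sec:app.collapse}, whose transformers are $f \comp \collapse{i}$ for global $f$) is a genuine weak independence, and then invokes the generic completeness theorem for cutoff-pruned unfoldings under a weak independence (\cref{thm:complete}). That generic theorem is proved by exactly the descent you describe: replace a cutoff $e$ by its smaller corresponding event $e'$, replay the suffix, and iterate. The payoff of the paper's factorization is that the step you label \emph{(a)} --- that the replayed suffix really yields a configuration --- comes for free from the adequacy theorem (\cref{thm:unf.complete}) applied to $\unf{\hat\ddom,\hatindep}$, rather than from a hand-built isomorphic copy whose causal/conflict structure you must verify directly; your inlined version instead makes explicit where the respects-independence hypothesis enters the commutation argument, which the paper leaves implicit inside the claim that $\hatindep$ is a weak independence. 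Both routes are sound, and your diagnosis of what fails without the respects-independence hypothesis (spurious dependencies or disabled transformers after a widened \ccollapse, breaking the replay) matches the paper's motivation for introducing that condition.
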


Note that~\cref{a:a1} terminates if the
lattice order $\sle$ is a well partial order
(every infinite sequence contains an increasing pair). This
includes, for instance, all finite domains.
Furthermore, it is also possible to accelerate the 
termination of~\cref{a:a1} using widenings in~\ccollapse
to~\emph{force cutoffs}.
Finally, notice that while we defined~\iscutoff using McMillan's size
order~\cite{Mcm93}, \cref{thm:a1complete} also holds if~\iscutoff is defined
using adequate orders~\cite{ERV02}, known to yield smaller prefixes.

\section{Experimental Evaluation}
\label{sec:exp}
In this section we evaluate our approach based on abstract unfoldings.
The goal of our experimental evaluation is to explore the following questions:
\begin{itemize}
	\item Are abstract unfoldings practical? (I.e., is our approach
	able to yield efficient algorithms that can be used to prove 
	properties of concurrent programs that require precise interference
	reasoning?)
	\item How does abstract unfoldings compare with competing 
	approaches such as thread-modular analysis and 
	symbolic partial order reduction?
\end{itemize} 

\paragraph{Implementation.} 
\label{sub:implementation}
To address these questions, we have implemented a new program analyser
based on abstract unfoldings baptized~\apoet,
which implements an efficient variant of the exploration 
algorithm described in~\cref{a:a1}.
The exploration strategy is based on~\poet~\cite{RSSK15}, 
an explicit-state model checker that implements a super-optimal 
partial order reduction method using unfoldings.

As described in~\cref{a:a1},~$\apoet$ is an analyser 
parameterized by a domain and a set of procedures:~\ccollapse,
~\iscutoff and~\mkevent.
As a proof of concept, we have implemented an interval analysis and a basic
parametric segmentation functor for arrays~\cite{CCL11}, which we
instantiate with intervals and concrete integers values (to represent
offsets).
In this way, we are able to precisely handle arrays of 
threads and mutexes.
$\apoet$ supports dynamic thread creation and uses~$\cil$ to inline 
functions calls.
The analyser receives as input a concurrent C program 
that uses the POSIX thread library and parameters to control 
the widening level and the use of cutoffs.
We implemented cutoffs according to the definition in~\cref{sec:cutoffs}
using an hash table that maps control locations to abstract values and 
the size of the local configuration of events.

$\apoet$ is parameterized by a domain functor of actions 
that is used to define independence and control the~\ccollapse procedure.
We have implemented an instance of the domain of actions for
memory accesses and thread synchronisations.
Transformers~\emph{record} the segments of the memory, intervals of addresses
or sets of addresses, that have been read or written and synchronisation actions 
related to thread creation, join and mutex lock and unlock operations.
This approach is used to compute a conditional independence relation
as transformers can perform different actions depending on the state.
The conditional independence relation is dynamically computed and 
is used in the procedure~\mkevent.

Finally, the~\ccollapse procedure was implemented
with a worklist fixpoint algorithm which uses the widening
level given as input.
In the interval analysis, we guarantee that~\ccollapse respects
independence using a predicate over the actions that identifies
whether a transformer is local or global.
This modularity allows us to define two modes of analysis for~$\apoet$:
1)~consider global transformers those that yield actions related to thread
synchronisation (\ie, thread creation/join and mutex lock/unlock) assuming
that the program is data-race free and 2)~consider an action global if it
accesses the heap or is related to thread synchronisation which can be used
to detect data races.
\dk{The sentence above needs to be re-written, please!}


\paragraph{Benchmarks.} 
\label{sub:experiments}
We employ 6 benchmarks adapted from the SVCOMP'17 (yielding 9
rows in \cref{table:results})
and 4 parametric programs (yielding 15 rows) written by us: map-reduce DNA
sequence analysis, producer-consumer, parallel sorting, and a thread pool.
Most SVCOMP benchmarks are unsuitable for this comparison because
either they are data-deterministic (and our approach fights data-explosion)
or create unboundedly many threads, or use non-integer data types (\eg, structs,
unsupported by our prototype).
Thus we use new benchmarks exposing data non-determinism and complex
synchronization patterns, where the correctness of assertions depend on the
history of synchronizations.
All new benchmarks are as complex as the most complex ones of the SVCOMP
(excluding device drivers).

Each program was annotated with assertions enforcing, among others,
properties related to thread synchronisation (\eg, after spawning the worker
threads, the master analyses results only after all workers finished),
or invariants about data (\eg, each thread accesses a non-overlapping segment of
the input array).

\paragraph{Tools compared.}
We compare $\apoet$ against the two approaches most closely related to ours:
abstract interpreters (represented by the tool~\astreea) and partial-order
reductions (PORs) handling data-nondeterminism (tools~\impara
and~\cbmc~5.6).
\astreea implements thread-modular abstract interpretation for concurrent
programs~\cite{Mine14},
\impara combines POR with interpolation-based reasoning to cope with data
non-determinism~\cite{WKO13}, and
\cbmc uses a symbolic encoding based on partial orders~\cite{AKT13}.
We sought to compare against symbolic execution tools but we could not find any
available to download or capable of parsing the benchmarks.

\newcommand\newrow{\\[-3pt]}
\newcommand\param[1]{\footnotesize(#1)}

\begin{table}[!t]

\footnotesize

\caption{Experimental results.
All experiments with \apoet,~\impara~and~\cbmc~were performed on an Intel Xeon CPU 
with 2.4 GHz and 4 GB memory with a timeout of 30 minutes;~\astreea~was ran 
on HP ZBook with 2.7 GHz i7 processor and 32 GB memory.
Columns are:
$P$:              \nr of threads;
$A$:              \nr of assertions;
$t(s)$:           running time (TO - timeout);
$E$:              \nr of events in the unfolding;
$E_{\text{cut}}$: \nr of cutoff events;
$W$:              \nr of warnings;
$V$:              verification result (S - safe; U - unsafe);
$N$:              \nr of node states;
A $*$ marks programs containing bugs.
{\tt <2} reads as ``\emph{less than 2}''.
}
\label{table:results}

\setlength\tabcolsep{3pt}
\def\sep{\hspace{10pt}}
\def\tinysep{\hspace{4pt}}

\centering
\tt
\begin{tabular}{lrr@{\tinysep}rrrrr@{\tinysep}cr@{\tinysep}rrr@{\tinysep}rr}
\toprule
  \multicolumn{3}{l}{\rm \footnotesize Benchmark}
& \multicolumn{4}{l}{\rm \footnotesize \apoet}
& \multicolumn{2}{l}{\rm \scriptsize \astreea}
& \multicolumn{3}{l}{\rm \footnotesize \impara}
& \multicolumn{2}{l}{\rm \footnotesize \cbmc\ 5.6}
\\[-2pt]
  \cmidrule(r){1-3}
  \cmidrule(r){4-7}
  \cmidrule(r){8-9}
  \cmidrule(r){10-12}
  \cmidrule(r){13-14}
  \rm Name 
& $P$
& $A$
& $t(s)$
& $E$
& $E_{\text{cut}}$
& $W$
& $t(s)$
& $W$
& $V$
& $t(s)$
& $N$
& $V$
& $t(s)$
\\[-2pt]
\midrule
\rm\sc {atgc\param{2}}      & 3 &  7 &    0.37  &    47 &     0  &  1 & 1.07 & 2 & - &    TO &    - & S &   2.37 \newrow
\rm\sc {atgc\param{3}}      & 4 &  7 &    5.78  &   432 &     0  &  1 & 1.69 & 2 & - &    TO &    - & S &    6.6 \newrow
\rm\sc {atgc\param{4}}      & 5 &  7 &  132.08  &  7195 &     0  &  1 & 2.68 & 2 & - &    TO &    - & S &  20.22 \newrow
\rm\sc {cond}               & 5 &  2 &    0.55  &   982 &     0  &  2 & 0.71 & 2 & - &    TO &    - & S &  34.39 \newrow 
\rm\sc {fmax\param{2,3}}    & 2 &  8 &    0.70  &   100 &    15  &  0 & 0.31 & 0 & - &    TO &    - & - &     TO \newrow
\rm\sc {fmax\param{3,3}}    & 2 &  8 &    0.58  &    85 &    11  &  0 &   <2 & 2 & - &    TO &    - & - &     TO \newrow
\rm\sc {fmax\param{5,3}}    & 2 &  8 &    0.56  &    85 &    11  &  0 & 1.50 & 2 & - &    TO &    - & - &     TO \newrow
\rm\sc {fmax\param{2,4}}    & 2 &  8 &    3.38  &   277 &    43  &  0 &   <2 & 2 & - &    TO &    - & - &     TO \newrow
\rm\sc {fmax\param{2,6}}    & 2 &  8 &   45.82  &  1663 &   321  &  0 &   <2 & 2 & - &    TO &    - & - &     TO \newrow
\rm\sc {fmax\param{4,6}}    & 2 &  8 &   61.32  &  2230 &   207  &  0 &   <2 & 2 & - &    TO &    - & - &     TO \newrow
\rm\sc {fmax\param{2,7}}    & 2 &  8 &  146.19  &  3709 &   769  &  0 & 1.87 & 2 & - &    TO &    - & - &     TO \newrow
\rm\sc {fmax\param{4,7}}    & 2 &  8 &  285.23  &  6966 &   671  &  0 &   <2 & 2 & - &    TO &    - & - &     TO \newrow
\rm\sc {lazy}               & 4 &  2 &    0.01  &    72 &     0  &  0 & 0.50 & 2 & - &    TO &    - & S &   3.59 \newrow
\rm\sc {lazy*}              & 4 &  2 &    0.01  &    72 &     0  &  1 & 0.49 & 2 & - &    TO &    - & U &   3.50 \newrow
\rm\sc {monab1}             & 5 &  1 &    0.27  &   982 &     0  &  0 & 0.61 & 0 & - &    TO &    - & S &  38.51 \newrow 
\rm\sc {monab2}             & 5 &  1 &    0.25  &   982 &     0  &  0 & 0.58 & 1 & - &    TO &    - & S &  37.34 \newrow 
\rm\sc {rand}               & 5 &  1 &    0.40  &   657 &     0  &  0 & 3.32 & 0 & - &    TO &    - & - &     TO \newrow 
\rm\sc {sigma}              & 5 &  5 &    2.62  &  7126 &     0  &  0 & 0.43 & 0 & - &    TO &    - & S & 189.09 \newrow
\rm\sc {sigma*}             & 5 &  5 &    2.64  &  7126 &     0  &  1 & 0.43 & 1 & - &    TO &    - & U & 141.35 \newrow
\rm\sc {stf}                & 3 &  2 &    0.01  &    69 &     0  &  0 & 0.66 & 2 & S &  5.93 &  250 & S &   2.12 \newrow
\rm\sc {tpoll\param{2}*}    & 3 & 11 &    1.23  &   141 &     7  &  1 & 1.97 & 2 & U &  0.64 &   80 & - &     TO \newrow
\rm\sc {tpoll\param{3}*}    & 4 & 11 &  109.22  &  1712 &    90  &  2 & 3.77 & 3 & U &  0.72 &  113 & - &     TO \newrow
\rm\sc {tpoll\param{4}*}    & 5 & 11 & 1111.46  & 33018 &  1762  &  2 & 8.06 & 3 & U &  0.78 &  152 & - &     TO \newrow
\rm\sc {thpool}             & 2 & 24 &   33.47  &   353 &   103  &  0 & 1.44 & 5 & S &    TO &    - & - &     TO \newrow
\\[-3pt]
\bottomrule
\end{tabular}
\end{table}

\paragraph{Analysis.}
\Cref{table:results} presents the experimental results.
When the program contained non-terminating executions (\eg, spinlocks), we used
5 loop unwindings for \cbmc as well as cutoffs in \apoet and a widening level
of~15.
For the family of~$\textsc{fmax}$ benchmarks, we were not able to
run~\astreea~on all instances, so we report approximated execution times and
warnings based on the results provided by~Antoine Miné on some of the
instances.
With respect to the size of the abstract unfolding, our experiments
show that~\apoet~is able to explore unfoldings up to 33K events and 
it was able to terminate on all benchmarks with an average execution
time of~81 seconds.
In comparison with~\astreea,~\apoet~is far more precise: we obtain only 12
warnings~(of which 5 are false positives) with~\apoet compared to~43~(32
false positives) with~\astreea.
We observe a similar trend when comparing~\apoet~with 
the~\mthread~plugin for~\framac~\cite{YaB12} and confirm that 
the main reason for the source of imprecision 
in~\astreea~is imprecise reasoning of thread interference.
In the case of~\apoet, we obtain warnings in benchmarks that are buggy
($\textsc{lazy*},\textsc{sigma*}$ and $\textsc{tpoll*}$ family), as
expected.
Furthermore,~\apoet reports warnings in the~$\textsc{atgc}$ benchmarks caused
by imprecise reasoning of arrays combined with widening and also in 
the~$\textsc{cond}$ benchmark as it contains non-relational assertions.

\apoet~is able to outperform \impara~and~\cbmc on all benchmarks.
We~believe that these experiments demonstrate that effective symbolic
reasoning with partial orders is challenging as~\cbmc only terminates
on~46\% of the benchmarks and~\impara only on~17\%.


\section{Related Work}
\label{sec:related}

In this section, we compare our approach with closely related program
analysis techniques for (i)~concurrent programs with (ii)~a bounded number
of threads and that (iii)~handle data non-determinism.

The thread-modular approach in the style of rely-guarantee 
reasoning has been extensively studied in 
the past~\cite{Mine14,Mine12,CH09,MPR07,FQ03,KW16,MM17}.
%
In~\cite{Mine14}, Miné proposes a flow-insensitive 
thread-modular analysis based on the interleaving semantics
which forces the abstraction to cope with interleaving explosion.
We address the interleaving explosion using the unfolding
as an algorithmic approach to compute a flow and path-sensitive
thread interference analysis.
A recent approach~\cite{MM17} uses relational domains and 
trace partitioning to recover precision in thread modular
analysis but requires manual annotations to guide the partitioning
and does not scale with the number of global variables.
The analysis in~\cite{FK12} is not as precise as our approach 
(confirmed by experiments with~\duet~on a simpler version of 
our benchmarks) as it employs an abstraction for unbounded parallelism.
The work in~\cite{KW16} presents a thread modular analysis that 
uses a lightweight interference analysis to achieve an higher level
of flow sensitivity similar to~\cite{FK12}.
The interference analysis of~\cite{KW16} uses a constraint system
to discard unfeasible pairs of read-write actions which is static 
and less precise than our approach based on independence.
The approach is also flow-insensitive 
in the presence of loops with global read operations.

The interprocedural analysis for recursive concurrent programs
of~\cite{Jeannet12} does not address the interleaving explosion.
A related approach that uses unfoldings is the causality-based 
bitvector dataflow analysis proposed in~\cite{FM07}.
There, unfoldings are used as a method to obtain 
dataflow information while in our approach they are the 
fundamental datastructure to drive the analysis.
Thus we can apply thread-local fixpoint analysis
while their unfolding suffers from path explosion due to local branching.
Furthermore, we can build unfoldings for general domains even with 
invalid independence relations while their approach is restricted to the 
independence encoded in the syntax of a Petri net and bitvector domains.


Compared to dynamic analysis of concurrent programs~\cite{AAJS14,FHRV13,KSH14,GLSW17},
our approach builds on top of a (super-)optimal partial-order reduction~\cite{RSSK15}
and is able to overcome a high degree of path explosion unrelated to thread
interference.



\section{Conclusion}
\label{sec:concl}

We introduced a new algorithm for static analysis of concurrent 
programs based on the combination of abstract interpretation 
and unfoldings.
Our algorithm explores an abstract unfolding using a new notion of
independence to avoid redundant transformer application in an optimal POR
strategy, thread-local fixed points to reduce the size of the unfolding, and
a novel cutoff criterion based on subsumption to guarantee termination of
the analysis.

Our experiments show that~\apoet~generates about 10x fewer
false positives than a mature thread modular abstract interpreter 
and is able to terminate on a large set of benchmarks as opposed to
solver-based tools that have the same precision.
We observed that the major reasons for the success of~\apoet~are: (1)~the
use of cutoffs to cope with and prune cyclic explorations caused by
spinlocks and (2)~\ccollapse~mitigates path explosion in the threads.
Our analyser is able to scale with the number of threads as long
as the interference between threads does not increase.
As future work, we plan to experimentally evaluate the application
of local widenings to force cutoffs to increase the scalability of 
our approach.

\subsubsection*{Acknowledgments.} 
The authors would like to thank Antoine Miné for the invaluable 
help with~\astreea and the anonymous reviewers for their helpful 
feedback.

\bibliographystyle{plain}
\bibliography{absint,concurrency,popl17}

\newpage
\appendix

\section{Proofs: Abstract Partial-Order Semantics}

\subsection{Proofs for \cref{sec:ind}: Independence of Transformers}

\liftedindep*

\begin{proof}
Let $f_1, f_2$ be transformers of~$\ccdom M$ such that $f_1 \indepp f_2$.
Let $a_1 \eqdef m_0^{-1} (f_1)$
and $a_2 \eqdef m_0^{-1} (f_2)$
be the corresponding program statements.
We know that $a_1 \indep a_2$.
Let $d \eqdef \set{s_1, \ldots, s_n} \in \reach{\ccdom M}$ be an element
of~$\ccdom M$. By definition of~$\ccdom M$ we know that~$d$ contains only
reachable states of~$M$.
Furthremore, we know that~$a_1$ and~$a_2$ commute on all of them.
Let $d_1 \eqdef f_1 \circ f_2 (d)$
and $d_2 \eqdef f_2 \circ f_1 (d)$
be the abstract elements obtained after executing the abstract
transformers in both orders. We need to show that $d_1 = d_2$.
\Wlog, we show that $d_1 \subseteq d_2$ (the opposite direction holds by
symmetry).
Let $s' \in d_1$ be an state in~$d_1$.
Then there is some $s \in d$ such that $s \fire{a_2,a_1} s'$.
Since $a_1$ and $a_2$ are independent under~$\indep$, then also
$s \fire{a_1,a_2} s'$ (by commutativity $a_1$ is enabled at $s$ because it was
at the state reached after executing~$a_2$ and both orderings reach~$s'$).
By definition of $f_1, f_2$ we get that $s' \in d_2$.
\end{proof}

\subsection{Results in Section 5: Unfolding Semantics}

This section contains the proofs of the formal statements made in
\cref{sec:unf}. All notations fixed in~\cref{sec:unf} are assumed here.
We will need to make some new definitions.

We recall now the definition of a PES.
An \emph{$X$-labelled prime event structure}~\cite{NPW81}
($X$-LPES, or PES in short)
is a tuple $\les \eqdef \tup{E, <, \cfl, h}$ where
${<} \subseteq E \times E$ is a strict partial order,
${\cfl} \subseteq E \times E$ is a symmetric, irreflexive relation, and 
$h \colon E \to X$ is a labelling function satisfying:
\begin{itemize}
\item
  for all $e \in E$, $\set{e' \in E \colon e' < e}$ is finite, and
  \eqtag{e:les1}
\item
  for all $e,e',e'' \in E$, if $e \cfl e'$ and $e' < e''$, then $e \cfl e''$.
  \eqtag{e:les2}
\end{itemize}

Event structures are naturally (partially) ordered by a \emph{prefix}
relation~$\ispref$.
Given two PESs
$\les \eqdef \tup{E, <, \cfl, h}$ and
$\les' \eqdef \tup{E', <', \cfl', h'}$,
we say that
$\les$ is a \emph{prefix} of~$\les'$, written $\les \ispref \les'$,
when
$E \subseteq E'$,
$<$ and $\cfl$ are the projections of $<'$ and $\cfl'$ to~$E$,
and $E \supseteq \set{e' \in E' \colon e' < e \land e \in E}$.
Moreover, the set of prefixes of a given PES~$\les$ equipped with~$\ispref$ is a
complete lattice.

\Cref{def:unf} defines $\unf{\ddom,\indep}$ using an iterative procedure that
constructs, possibly, an infinite object.
We call \emph{unfolding prefix} the structure~$\les$ that the algorithm
constructs after countably many steps, possibly before reaching fixpoint.

We now prove that the set of unfolding prefixes equipped with relation~$\ispref$
forms a complete \mbox{join-semilattice},
where the operator $\union \cdot$, defined below, is the join operator.
In turn, this implies the existence of a unique
\mbox{$\ispref$-maximal} element,
that will be found by \cref{def:unf} when it reaches fixpoint.

We first define the operator~$\union \cdot$.
Let
\[
P \eqdef \set{
\tup{E_1, <_1, \cfl_1, h_1},
\tup{E_2, <_2, \cfl_2, h_2},
\ldots}
\]
be a countable set of finite unfolding prefixes of~$\ddom$.
The \emph{union} of all of them is the PES
$\union P \eqdef \tup{E, {<}, {\cfl}, h}$, where
\[
E \eqdef \bigcup_{1 \le i} E_i
\qquad
{<} \eqdef \bigcup_{1 \le i} {<_i}
\qquad
h \eqdef \bigcup_{1 \le i} h_i,
\]
and $\cfl$ is the $\subseteq$-minimal relation on $E \times E$ that
satisfies \eref{les2} and such that
$e \cfl e'$ holds
for every two events $e, e' \in E$ iff
\begin{equation}
\label{e:union}
e \notin [e'] \text{ and }
e' \notin [e] \text{ and }
\lnot (h(e) \bowtie h(e')).
\end{equation}

Since every element of $P$ is a PES, clearly $\union P$ is also a PES.
\cref{e:les1} and \eref{les2} are trivially satisfied.
Notice that all events in $E_1, E_2, E_3, \ldots$ are pairs of the form
$\tup{t,H}$, and the union of two or more $E_i$'s will ``\textit{merge}'' many
equal events.

\begin{lemma}
\label{lem:prefixes.lub}
For every set $P$ of unfolding prefixes, $\union P$ is the least-upper bound
of~$P$ with respect to the order $\ispref$.
\end{lemma}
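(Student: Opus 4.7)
The plan is to verify two things: $\union P$ is an upper bound of~$P$ in the prefix order, and it is the least such upper bound. Both hinge on the observation that, by \cref{def:unf}, every event has the form $e = \tup{f, C}$ where $C$ is its history, so the local configuration $[e]$ is intrinsic to~$e$ and identical in every unfolding prefix that contains~$e$.

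First I would show $\les_i \ispref \union P$ for each~$i$. The inclusion $E_i \subseteq E$ and the fact that $<$ and $h$ restrict to $<_i$ and $h_i$ are immediate from the componentwise-union definition. For causal closure, if $e = \tup{f, C} \in E_i$ and $e' < e$ in~$\union P$, then $e' \in C$; but since~$\les_i$ was itself produced by~\cref{def:unf}, adding~$e$ required all of~$C$ to be present, so $e' \in E_i$. The delicate point is the conflict restriction. The conflict relation in both~$\les_i$ and~$\union P$ is obtained by closing a ``direct-conflict'' predicate under hereditary conflict, and that predicate is determined solely by labels and local histories. Since both pieces of data are intrinsic to the events, the direct conflicts involving two events $e, e' \in E_i$ coincide in~$\les_i$ and in~$\union P$, and the hereditary closures agree on~$E_i$ by the causal-closure property just established.

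Second I would show that any unfolding prefix~$\les'$ that is an upper bound of~$P$ satisfies $\union P \ispref \les'$. Since $E_i \subseteq E'$ for every~$i$, we get $E \subseteq E'$; the relations~$<$ and~$h$ restrict correctly from~$\les'$ to~$E$ because they already restrict correctly from each~$\les_i$. Causal closure of~$E$ in~$\les'$ reduces to the corresponding property for each~$\les_i$ in~$\les'$, which is part of~$\les_i \ispref \les'$. The conflict projection follows again because the direct-conflict predicate is intrinsic to pairs of events and is preserved under both restrictions, and hereditary closure commutes with restriction to a causally closed subset.

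The main obstacle I anticipate is the conflict relation, which is specified via a $\subseteq$-minimality condition rather than a constructive one. The whole argument rests on verifying that this minimization commutes with the unions and restrictions at play; once this is pinned down by unfolding the intrinsic characterization of direct conflict in terms of the label~$f$ and history~$C$ carried by each event, both inclusions reduce to routine set-theoretic bookkeeping.
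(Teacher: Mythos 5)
Your proposal is correct and follows essentially the same route as the paper's proof: a componentwise verification that $\union P$ is an upper bound and then the least one, with the load-bearing observation that each event $\tup{f,C}$ intrinsically carries its history and label, so causality and direct conflict are determined by the events themselves and the hereditary closure restricts correctly to each causally closed $E_i$. The only piece the paper includes that you omit is the preliminary reduction of an arbitrary (possibly infinite) family of prefixes to a countable set of finite ones, a technicality that does not affect the core argument.
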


\begin{proof}
Let $F \eqdef \set{\ppref_i}$ with $i \in \nat$ be a countable set
of finite or infinite prefixes, where
$\ppref_i \eqdef \tup{E_i, <_i, \cfl_i, h_i}$.
In fact, we can assume that all $\ppref_i$ are finite.
Either $\ppref_i$ is finite or it has been constructed by the third rule
in~\cref{def:unf}.
In the second case, it is the $\union \cdot$ of a countably
infinite set of finite prefixes. In both cases, $\ppref_i$ accounts for
countably many finite prefixes.

Now, for any countable set of prefixes $X \eqdef \set{\qpref_i}$, $i \in \nat$,
it is immediate to show that
\[
\union{\ppref,\union X} = \union{\ppref, \qpref_1, \qpref_2, \ldots}.
\]
Finally, since the union of countably many countable sets is a countable set, we
can assume \wlogg that~$F$ is a countable set of finite prefixes
(this assumes the Axiom of choice).

Let $\ppref \eqdef \union F$ be their union, where
$\ppref \eqdef \tup{E, <, \cfl, h}$.
We need to show that
\begin{itemize}
\item (upper bound) $\ppref_i \ispref \ppref$;
\item (least element) for any unfolding prefix $\ppref'$ such that $\ppref_j \ispref \ppref'$
holds for all $1 \le j$,
we have that $\ppref \ispref \ppref'$.
\end{itemize}

We start showing that $\ppref$ is an upper bound. Let $\ppref_i \in F$ be an
arbitrary unfolding prefix. We show that $\ppref_i \ispref \ppref$:
\begin{itemize}
\item
  Trivially $E_i \subseteq E$.

\item
  ${<_i} \subseteq {<} \cap (E_i \times E_i)$. Trivial.
\item
  ${<_i} \supseteq {<} \cap (E_i \times E_i)$.
  Let $e, e' \in E$ be two events of $\ppref$.
  Assume that $e < e'$ and that both $e$ and $e'$ are in $E_i$.
  Since $e' \in E$, there is some $j \in \nat$ such that $e <_j e'$,
  and both $e$ and $e'$ are in $E_j$.
  Assume that $e' \eqdef \tup{t,H}$.
  Since $\ppref_j$ is a finite prefix constructed by \defref{unf},
  then necessarily $e \in H$.
  As a result, \defref{unf} must have found that
  $e'$ was in $H$ when adding $e$ to the prefix that eventually became
  $\ppref_i$, and consequently $e' <_i e$.

\item
  ${\cfl_i} \subseteq {\cfl} \cap (E_i \times E_i)$. Trivial.
\item
  ${\cfl_i} \supseteq {\cfl} \cap (E_i \times E_i)$.
  Assume that $e \cfl e'$ and that $e, e' \in E_i$.
  We need to prove that $e \cfl_i e'$.
  Assume \wlogg that $e'$ was added to $\ppref_i$ by \defref{unf}
  after~$e$.
  If $e$ and $e'$ satisfy \cref{e:union}, then trivially
  $e \cfl_i e'$.
  If not, then assume \wlogg that
  there exists some $e'' < e'$ such that
  $e \cfl e''$, and such that $e$ and $e''$ satisfy \cref{e:union}.
  Then $e \cfl_i e''$ and,
  since $\ppref_i$ is a LES 
  then we have $e \cfl_i e'$.
\item
  $h_i = h \cap (E_i \times E_i)$. Trivial.
\end{itemize}

We now focus on proving that $\ppref$ is the least element among the upper
bounds of $F$. Let $\ppref' \eqdef \tup{E', <', \cfl', h'}$ be an upper bound of all elements of $F$.
We show that $\ppref \ispref \ppref'$.
\begin{itemize}
\item
  Since $E$ is the union of all $E_i$ and all $E_i$ are by hypothesis
  in~$E'$, then necessarily $E \subseteq E'$.

\item
  ${<} \subseteq {<'} \cap (E \times E)$.
  Assume that $e < e'$. By definition $e$ and $e'$ are in $E$, so we only
  need to show that $e <' e'$.
  We know that there is some $i \in \nat$ such that $e <_i e'$.
  We also know that $\ppref_i \ispref \ppref'$, which implies that $e <' e'$.
\item
  ${<} \supseteq {<'} \cap (E \times E)$.
  Assume that $e < e'$ and that $e,e' \in E$.
  We know that there is some $i \in \nat$ such that $e,e' \in E_i$.
  We also know that $\ppref_i \ispref \ppref'$, which implies that
  ${<_i} = {<'} \cap (E_i \times E_i)$.
  This means that $e <_i e'$, and so $e < e'$.

\item
  $h = h' \cap (E \times E)$. Trivial.

\item
  ${\cfl} \subseteq {\cfl'} \cap (E \times E)$.
  Assume that $e \cfl e'$.
  Then $e$ and $e'$ are in $E$.
  Two things are possible. Either $e,e'$ satisfy \cref{e:union} or, \wlogg,
  there exists some $e'' < e'$ such that $e$ and $e''$ satisfy
  \cref{e:union}.
  In the former case, using items above, it is trivial to show
  that $\lnot (e <' e')$,
  that $\lnot (e' <' e)$, and
  that $h'(e) \depen h'(e')$.
  This means that $e \cfl' e'$.
  In the latter case its the same.

\item
  ${\cfl} \supseteq {\cfl'} \cap (E \times E)$. Trivial.
\end{itemize}
\end{proof}

\unfunique*

\begin{proof}
It is trivial to show that $\unf{\ddom,\indep}$ satisfies
\cref{e:les1} and
\cref{e:les2}.
When the procedure in \cref{def:unf} reaches fixpoint it obviously computes the
least-upper bound of the set of unfolding prefixes. That is a unique element in
the lattice.
\end{proof}

\unfsound*

\begin{proof}
Item (2) has already been proved in~\cite[Lemma 16]{RSSK15long}.

To prove item~(3) we assume that~(1) holds.
Item~(3) then holds as a consequence of the way in which the set
$\mathcal{H}_{\les,\indepx,f}$ of histories for an event is defined.
Assume that~$C \eqdef [e]$ is the local configuration of event~$e \eqdef \tup{f,H}$.
Since there is only one maximal event in $C$ (event~$e$), then
necessarily~$\sigma$ has the form $\sigma \eqdef \tilde\sigma . h(e)$,
for $\tilde\sigma \in \inter H$. From item~(1) we know that all interleavings
of~$H$ reach the same dataflow fact, and since $\state H$ is defined as the meet
of all of them, then necessarily $\state H = \state{\tilde\sigma}$.
We also know, from~\defref{unf} that
transformer~$f$ is enabled at~$\state H$.
This means that~$\state \sigma \ne \bot$.

Finally we prove~(1).
The proof is by induction on the size $|C|$ of the configuration.

\emph{Base case}. $|C| = 0$ and so, $C = \emptyset$. The set of interleavings
of~$C$ contains zero linearizations and the result trivially holds.

\emph{Inductive Step}.
Assume that the result holds for configuration of size $k - 1$
and assume that $|C| = k$.
Let $e \in C$ be any $<$-maximal event in $C$, and assume that
$\sigma$ and $\sigma'$ have the form
\begin{align*}
\sigma  &\eqdef \tilde\sigma, h(e), f_1, \ldots, f_l \\
\sigma' &\eqdef \tilde\sigma', h(e), g_1, \ldots, g_m.
\end{align*}
Recall that the interleavings of a configuration are the topological orderings
of events \wrt causality. As a result transformer $h(e)$ is independent
in~$\indep$ to the transformers that label all events
$f_1, \ldots, f_l$ and $g_1, \ldots, g_m$.

Now consider the dataflow fact $d \eqdef \state{\tilde\sigma}$. If $d = \bot$,
then clearly $\tilde\sigma, h(e), f_1$ reaches $\bot$ as well.
If $d \ne \bot$, then $\tilde\sigma$ is a run of~$\ddom$ and $d \in \reach
\ddom$. Now, by construction of~$\indep$, we have that $\tilde\sigma, f_1, h(e)$ is
also a run of~$\ddom$ and
\[
\state{\tilde\sigma, h(e), f_1} = \state{\tilde\sigma, f_1, h(e)}.
\]
Applying the same argument $l-1$ times more we prove that
$\tilde\sigma, f_1, \ldots, f_l, h(e)$ is a run of~$\ddom$ and that
\[
\state{\sigma} = \state{\tilde\sigma, f_1, \ldots, f_l, h(e)}.
\]
That is, we have ``\textit{pushed back}'' the occurrence of transition $h(e)$ in
the interleaving without changing the state ($\bot$ or not) reached by the
interleaving.
Using the same argument, this time applied to~$\sigma'$ instad of~$\sigma$, we
can also show that
\[
\state{\sigma'} = \state{\tilde\sigma', g_1, \ldots, g_m, h(e)}.
\]
Now, we remark that both
$\tilde\sigma f_1, \ldots, f_l$ and
$\tilde\sigma' g_1, \ldots, g_m$
are interleavings of $C \setminus \set e$, a configuration of size $k-1$.
By induction hypothesis both interleavings thus satisfy that
\[
\state{\tilde\sigma, f_1, \ldots, f_l} =
\state{\tilde\sigma', g_1, \ldots, g_m}
\]
It then follows that
\[
\begin{split}
\state{\sigma} &= \state{\tilde\sigma, f_1, \ldots, f_l, h(e)} \\
 &= \state{\tilde\sigma', g_1, \ldots, g_m, h(e)} \\
 &= \state{\sigma'}
\end{split}
\]
\end{proof}

\unfcomplete*

\begin{proof}
Assume that $\sigma$ fires at least one transition.
The proof is by induction on the length $|\sigma|$ of the run.

\emph{Base Case}.
If $\sigma$ fires one transformer~$f$, then~$f$ is enabled at $d_0$,
the initial dataflow fact of~$\ddom$.
Then $\set \dagger$ is a history for~$f$,
as necessarily $\state{\set \dagger}$ enables~$f$.
This means that $e \eqdef \tup{f, \set \dagger}$ is an event of~$\unf{\ddom,\indep}$, and
clearly $\sigma \in \inter{\set{\dagger, e}}$.
It is easy to see that no other event $e'$ different than $e$ but such that
$h(e) = h(e')$ can exist in $\unf{M,\indep}$ and satisfy that the history $\causes{e'}$
of~$e'$ equals the singleton $\set \dagger$.
The representative configuration for $\sigma$ therefore exists and is unique.

\emph{Inductive Step}.
Assume that $\sigma \eqdef \sigma' f$.
By the induction hypothesis, we assume that there exist a unique configuration
$C'$ such that $\sigma' \in \inter{C'}$.
By \thmref{unf.sound}, all sequences in $\inter{C'}$ reach the same dataflow
fact~$d$.
Furthermore, $\sigma'$ is one of them, and it is also a run in~$\runs \ddom$.
This implies that $d \ne \bot$.
It also implies that~$f$ is enabled at~$d$.

If all $<$-maximal events $e' \in C'$
satisfy that $h(e') \depen f$, then $C'$ is a history for transformer~$f$, and
$\unf{\ddom,\indep}$ contains an event $e \eqdef \tup{f,C'}$.
Let $C \eqdef C' \cup \set e = [e]$ be the configuration that contains~$C'$
and~$e$.
Clearly $\sigma \in \inter C$.
Below we show that such~$C$ is unique.

Alternatively, $C'$ could have one or more maximal events~$e'$ such that
$h(e') \indep f$. We now find a history for~$f$ inside of~$C'$, as follows.
Let $C'' \eqdef C' \setminus \set{e'}$, for any such $e'$,
and let $d'' \eqdef \state{C''}$.
Since~$f$ is enabled at $d$ and $f \indep h(e')$, then necessary~$f$ is also
enabled at~$d''$, as otherwise $f \circ h(e) (d'') = h(e) \circ f (d'')$ would
be~$\bot$, and~$f$ would not be enabled at~$d$.
If all $<$-maximal events of~$C''$ are dependent with~$f$, then $C''$ is a
history for~$f$, and we set $e \eqdef \tup{f,C''}$. If not, we can apply again
the argument a finite number of times (as $C'$ is finite) until we find a
history~$H$ for~$f$ inside of~$C'$ ($\set \dagger$ is always a valid history).
We set $e \eqdef \tup{f,H}$ and $C \eqdef C' \cup \set e$.
As before, clearly $\sigma \in \inter C$.

In both cases we found a configuration~$C \supseteq C'$ such that
$\sigma \in \inter C$.
We now argue that such~$C$ is unique.
By induction hypothesis we know that~$C'$ is the only configuration that
represents~$\sigma'$. 
If there was another $C''$ that represents~$\sigma$ and such that $C'
\not\subseteq C''$, then removing the maximal event that represents~$f$ in
$\sigma$ would yield a second representative for $\sigma'$.
This implies that any such $C''$ must include $C'$.
Showing uniqueness now reduces to showing
that~$e$ is the only event in
$\unf{\ddom,\indep}$ such that $C' \cup \set e$ represents~$\sigma$.

By contradiction, assume that $\unf{\ddom,\indep}$ contains another
event~$e' \eqdef \tup{f, H'}$
such that $C' \cup \set{e'}$ represents $\sigma$.
Assume that $e$ has the form $\tup{f,H}$.
Since $e \ne e'$ we know that $H \ne H'$.
By construction we know that $H \subseteq C'$.
If $H' \not\subseteq C'$, then $C' \cup \set{e'}$ would not be a configuration
(not causally closed).
So also $H' \subseteq C'$.
Now, since $H \ne H'$, \wlogg at least one of the maximal events in $H$ is not
in $H'$. Furthermore, that event is in~$C'$.
By \defref{unf} this means that $e'$ is in conflict with that event, and so
$C' \cup \set{e'}$ is not a configuration.
This is a contradiction.
\end{proof}

\subsection{Results in \cref{sec:abs-unf}: Abstract Unfoldings}

\absunfcorrect*

\begin{proof}
For the same reasons as in \thmref{unf.complete}, the statement of the theorem
restricts $\sigma$ to have at least one transformer.
The proof is by induction on the length $|\sigma|$ of the run.

\emph{Base Case}.
Run $\sigma$ fires one transformer~$f$ which is enabled at~$d_0$.
Let $\bar f \eqdef m(f)$ be the associated abstract transformer.
Then~$\bar f$ is enabled at~$\bar d_0$, and
$\set \dagger$ is a history for~$\bar f$.
As a result $\bar e \eqdef \tup{\bar f, \set \dagger}$ is an event
of~$\unf{\bar\ddom,\barindep}$, and
clearly $m(\sigma) = \bar f \in \inter{\set{\dagger, \bar e}}$.
It is immediate to show that $\set{\dagger, \bar e}$ is the only configuration that
represents $m(\sigma)$.

\emph{Inductive Step}.
Assume that $\sigma \eqdef \sigma' f$.
By the induction hypothesis, we assume that there exist a unique
configuration~$\bar C'$ in $\unf{\bar\ddom,\barindep}$ such that
$m(\sigma') \in \inter{\bar C'}$.

We fix some notation.
Let $\bar d' \eqdef \state{\bar C'}$ be the abstract state reached by $\bar C'$.
Let $d' \eqdef \state{\sigma'}$ be the concrete state reached by~$\sigma'$.
Let $\bar f \eqdef m(f)$ be the abstract counterpart of~$f$.

Now we show that $d' \sqsubseteq \gamma(\bar d')$.
Recall that $\bar d'$ is defined as the meet of the state reached by all
interleavings of~$\bar C'$.
Therefore, $\bar d'$ does not satisfy that $\state{m(\sigma')} \barsqsub \bar d'$,
which would probably be the easiest strategy to prove our goal.
We follow a different reasoning.
Since $m(\sigma') \in \inter{\bar C'}$, by
we get that
$m^{-1} (\inter{\bar C'})$ is a set of runs of the concrete domain~$\ddom$.
Furthermore, all those runs reach the same concrete dataflow fact~$d'$
as~$\sigma'$.
Then all runs in $\inter{\bar C'}$ reach abstract dataflow facts that soundly
approximate~$d'$.
What is more, in a Galois connection, the concretization map~$\gamma$ preserves abstract
meets. Formally, for any two abstract facts $\bar d_1, \bar d_2$, we have
\[
\gamma(\bar d_1) \meet \gamma(\bar d_2) ~\sle~
\gamma(\bar d_1 \barmeet \bar d_2).
\]
We thus can make the following development:
\[
\begin{split}
\gamma(\bar d')
  &= \gamma(\barbigmeet_{\bar\sigma \in \inter{\bar C'}} \state{\bar \sigma}) \\
  &\sge \bigmeet_{\bar\sigma \in \inter{\bar C'}} \gamma(\state{\bar\sigma}) \\
  &\sge \bigmeet_{\bar\sigma \in \inter{\bar C'}} d' \\
  &= d'
\end{split}
\]
This shows that $d' \sle \gamma(\bar d')$.
It also shows that $\bar f$ is enabled at $\bar d' = \state{\bar C'}$, since~$\bar f$
is a sound approximation of~$f$.

If all maximal events~$e$ of~$\bar C'$ are such that
$h(e) \bardepen \bar f$, then~$\bar C'$ is a history for~$\bar f$, and
$\bar e \eqdef \tup{\bar f, \bar C'}$ is an event of $\unf{\bar\ddom,\barindep}$.
Let $\bar C \eqdef \bar C' \cup \set{\bar e} = [\bar e]$
be the configuration that contains $\bar C'$ and $\bar e$.
Clearly $m(\sigma) \in \inter{\bar C}$.
Below we show that such~$\bar C$ is unique.

Alternatively, $\bar C'$ could have one or more maximal events independent
with~$\bar f$. Let~$\bar e'$ be any $<$-maximal event in~$\bar C'$ such that
$h(e') \barindep f$.
In the sequel we find a history for~$\bar f$ inside of~$C'$.
Let $\bar C'' \eqdef \bar C' \setminus \set{\bar e'}$,
and let $\bar d'' \eqdef \state{\bar C''}$.

We show that $\bar f$ is enabled at $\bar d''$.
Since all interleavings of~$\bar C'$ correspond to runs of~$\ddom$, necessarily
all interleavings of~$\bar C''$ are also executions of~$\ddom$;
and all of them reach the same dataflow fact, say~$d''$.
Using the same reasoning as above,
we can show that $d'' \sle \gamma(\bar d'')$.
Now, showing that $\bar f$ is enabled at $\bar d''$ reduces to showing that~$f$
is enabled at $d''$.
This, in turn, is a consequence of the fact that
$m^{-1}(h(\bar e')) (d'') = d'$ and $f(d') \ne \bot$ and the fact that
$m^{-1}(h(\bar e'))$ and $f$ are independent (we skip details).

This shows that $\bar f$ is enabled at $\state{\bar C''}$.
If all maximal events of $\bar C''$ are dependent with~$\bar f$, then~$\bar C''$
is a history for~$\bar f$.
If not, we can apply again
the argument a finite number of times (as $\bar C''$ is finite) until we find a
history~$H$ for~$\bar f$ inside of~$\bar C''$ ($\set \dagger$ is always a valid history).
We set $\bar e \eqdef \tup{\bar f,H}$ and
$\bar C \eqdef \bar C' \cup \set{\bar e}$.

In both cases we found a configuration~$\bar C \supseteq \bar C'$ such that
$m(\sigma) \in \inter{\bar C}$.
%

\cro{FIXME (temporarily fixed) continue here, should be a similar argument to
that of Theorem 2}

Showing that~$\bar C$ is unique requires the same reasoning than in
\thmref{unf.complete}, which we skip here.
\end{proof}

\remove{
\absunfruns*
\begin{proof}
Clearly \eref{absunf.runs2} is a consequence of \eref{absunf.runs1} and the
fact that every concrete execution $\sigma \in \runs \ddom$ has a representative
counterpart $m(\sigma) \in \runs{\bar\ddom}$ in the abstract domain.

Assume that $m^{-1} (\inter C)$ contains one execution
$\sigma$ from the set~$\runs \ddom$.
Recall that $\bar\indep$ is the pointwise lifting of
a weak relation~$\indep$ on~$\ddom$.
As a result, there exists a configuration $C'$ in~$\unf{\ddom,\indep}$
the unfolding of the concrete domain such that
$\sigma \in \inter{C'}$.
In fact,~$C$ and~$C'$ are isomorphic labelled partial orders, so the set of
interleavings of~$C$ is in bijective correspondance with the interleavings
of~$C'$.
Since $\sigma$ is a run of the concrete domain~$\ddom$,
by \thmref{unf.sound} all interleavings of~$C'$ are also runs of~$\ddom$.
This proves \eref{absunf.runs1}.
\end{proof}
}

%
%
%

\subsection{Results in \cref{sec:collapse}: Plugging Thread-Local Analysis}
\label{sec:app.collapse}

Our goal in this section is proving \cref{thm:collapse.correct}.
We will introduce some new notions necessary to formalize the operation
performed by the thread-local analysis.
In short, given~$\ddom$ and an implementation of \ccollapse, we will define a
new analysis instance
$\hat\ddom$, called the \emph{collapsing domain}, that we use to prove the
theorem.

For any global transformer~$f \in F^\text{glo}_i$,
we define the \emph{collapsing transformer}
$\hat f \colon D \to D$ of~$f$ as $\hat f \eqdef f \comp \collapse{i}$.
The set of collapsing transformers induce a new analysis
instance
\[
\hat \ddom \eqdef \tup{\hat D, \hatsle, \hat F, \hat d_0}
\]
that soundly approximates~$\ddom$, where~$\hat D$, $\hatsle$, and $\hat d_0$
are the same as in~$\ddom$, and the set of transformers is
$\hat F \eqdef \set{\hat f \in D \to D \colon f \text{ is global in } \ddom}$.
Our notion of approximation here is different than the one given in
\cref{sec:prelim}.
There we required exactly one abstract transformer per concrete one,
while~$\hat\ddom$ has only abstract transformers for the global concrete ones.
The notion of approximation here is rather suttering simulation of runs.
For any run $\sigma \in \runs \ddom$,
recall that $\hat\sigma$ is the subsequence of~$\sigma$ obtained by removing the local
transformers.
Clearly, $\hat\sigma \in \runs{\hat\ddom}$ when $\sigma \in \runs \ddom$.

Let $\indep$ be a weak independence on~$\ddom$.
One would wish that the collapsing transformers exhibits the same independence
than the original ones.
That is, if $f_1 \indep f_2$, then $\hat f_1$ and $\hat f_2$ commute on all
reachable facts.
Unfortunately, this is not true in general.
The abstract interpreter hidden behind $\collapse{\cdot,\cdot}$ might
apply widening on local loops, or data-flow joins could introduce imprecision
when merging paths,
all of which may affect the commutativity of the collapsing transformers.
In the sequel, we employ the pointwise lifting of relation~$\indep$
to~$\hat\ddom$, defined
as~${\hatindep} \eqdef
\set{\tup{\hat f_1, \hat f_2} \colon
f_1, f_2 \text{ are global and } f_1 \indep f_2}$.
As we said, relation~$\hatindep$ is in general not a weak independence relation
in~$\hat\ddom$.
However, using ideas similar to those of \cref{sec:abs-unf} we can proof the
following:

\begin{lemma}[Soundness of the abstraction]
\label{lem:dhat.correct}
For any execution $\sigma \in \runs \ddom$
there is a unique configuration~$C$ of
$\unf{\hat\ddom,\hatindep}$ such that $\hat\sigma \in \inter C$.
\end{lemma}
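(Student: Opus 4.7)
The plan is to prove the lemma by induction on the length~$|\hat\sigma|$, closely mimicking the structure of the proof of \thmref{absunf.correct} but compensating for the fact that~$\hatindep$ is not in general a weak independence relation on~$\hat\ddom$. The invariant I will maintain is: for every prefix~$\sigma'$ of~$\sigma$ in~$\runs \ddom$, there exists a unique configuration $\hat C'$ of~$\unf{\hat\ddom,\hatindep}$ such that $\hat{\sigma'} \in \inter{\hat C'}$.

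In the base case $\hat\sigma$ is empty and~$\emptyset$ trivially and uniquely represents it. For the inductive step, write $\sigma = \sigma' \sigma_\ell f$ where~$f$ is the last global transformer of~$\sigma$, $\sigma_\ell$ is a (possibly empty) sequence of local transformers of thread $i \eqdef p(f)$, and $\sigma'$ is a shorter prefix in~$\runs \ddom$. By the induction hypothesis the configuration~$\hat C'$ representing $\hat{\sigma'}$ exists and is unique. Set $\hat f \eqdef f \comp \collapse{i}$. The core step is then building a history $H \subseteq \hat C'$ for~$\hat f$: starting from $H \eqdef \hat C'$, I repeatedly remove any $<$-maximal event~$e$ with $h(e) \hatindep \hat f$ until every remaining $<$-maximal event of~$H$ is dependent with~$\hat f$. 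Then $\hat e \eqdef \tup{\hat f, H}$ qualifies as an event of $\unf{\hat\ddom,\hatindep}$, and $\hat C \eqdef \hat C' \cup \set{\hat e}$ is a configuration representing $\hat\sigma$.

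The delicate point, and the main obstacle of the proof, is to argue that after each such removal the transformer~$\hat f$ remains enabled at the new value of $\state H$. Enabledness cannot be transported by direct commutativity of collapsing transformers, since $\hatindep$ need not satisfy any commutation property. Instead, I plan to exploit the concrete witness~$\sigma$: by the soundness property of~\ccollapse, the state reached by~$\hat{\sigma'}$ in~$\hat\ddom$ soundly approximates $\state{\sigma'}$, and applying~$f$ to $\collapse{i,\state{\sigma'}}$ yields a non-$\bot$ element. Using concrete independence of the transformers underlying the removed events (\propref{lifting-independence} applied to~$\indep$ on~$\ddom$), the execution~$\sigma$ can be permuted into an equivalent run of~$\ddom$ whose projection onto global transformers is an interleaving of~$\hat C$, and concrete soundness of $\collapse{i,\cdot}$ then propagates enabledness of~$\hat f$ down to $\state H$ after each removal. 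This essentially recovers, from the mere existence of a concrete witness, enough of a local ``respects independence'' property to carry the construction through.

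Uniqueness of~$\hat C$ follows as in \thmref{unf.complete} and \thmref{absunf.correct}. Any alternative representative of~$\hat\sigma$ must, by induction hypothesis, extend the unique $\hat C'$ by exactly one event labelled by~$\hat f$; the history of that event must lie entirely in~$\hat C'$ for the extension to be causally closed; and any event whose history differs from~$H$ in a $<$-maximal element would, by \defref{unf}, be in immediate conflict with an element of~$\hat C'$, contradicting conflict-freeness of the extended configuration. This forces the added event to coincide with~$\hat e$, and hence the configuration with~$\hat C$.
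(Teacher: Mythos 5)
Your proof is correct and follows essentially the same route as the paper's: induction on the run, splitting off the trailing local transformers and the final global transformer, using the over-approximation property of the thread-local analysis together with monotonicity to establish enabledness, shrinking the configuration to a history by removing independent maximal events, and arguing uniqueness as in the adequacy theorem. Your explicit workaround for the failure of~$\hatindep$ to be a weak independence on~$\hat\ddom$ --- routing enabledness through the concrete witness --- is exactly what the paper's appeal to the argument of Theorem~\ref{thm:absunf.correct} amounts to (though the proposition you cite for concrete commutation is not the relevant one here; commutation is already given by the hypothesis that $\indep$ is a weak independence on~$\ddom$).
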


\begin{proof}

(\emph{Sketch})
The proof of this result is very similar to that of \thmref{absunf.correct}.

\emph{Base Case}.
Run $\sigma$ fires only local transformers and the length of $\hat\sigma$ is
zero. As in \thmref{absunf.correct} there is a unique representative
configuration
in~$\unf{\hat\ddom,\hatindep}$.

\emph{Inductive Step}.
Assume that $\sigma \eqdef \sigma' f$.
By the induction hypothesis, we assume that there exist a unique
configuration~$\bar C'$ in $\unf{\hat\ddom,\hatindep}$ such that
$\hat\sigma' \in \inter{C'}$.

We distinguish two cases
\begin{itemize}
\item
  Transformer $f$ is local.
  Then $\hat\sigma = \hat\sigma'$ and $C'$ is a representative configuration
  for $\sigma$.
\item
  Transformer $f$ is global.
  Then assume that $\sigma$ is of the form
  $\sigma = \sigma_g\sigma_l f$, where $\sigma_g$ ends in a global transformer
  and $\sigma_l$ contains only local transformers.
  Observe that $C'$ is also a representative configuration of~$\sigma_g$.

  Clearly,
  $\state{C'} \sge \state{\sigma_g}$.
  Since $\collapse{\cdot,\cdot}$ always overapproximates the execution of any
  arbitrary sequence of local transformers, it must also overapproximate the
  execution of $\sigma_l$ from $\state{\sigma_g}$.
  This proves that $\hat f$ is enabled at $\state{C'}$.

  If all maximal events in $C'$ are dependent with $\hat f$ in~$\hatindep$,
  then $C'$ is a history for $\hat f$ and $e \eqdef \tup{\hat f, C'}$
  an event
  of~$\unf{\hat\ddom,\hatindep}$.
  If not, using the same reasoning as in \thmref{absunf.correct}
  we can find a history $H \subseteq C'$ for $\hat f$, and define event
  $e \eqdef \tup{\hat f, H}$.

  In both cases, by construction $C \eqdef C' \cup \set e$ is a configuration
  of~$\unf{\hat\ddom,\hatindep}$.
  Configuration~$C$ is a representative of $\hat \sigma$.
\end{itemize}
\end{proof}

We can now easily prove the main theorem of the section.

\collapsecorrect*

\begin{proof}
A call to \unfold{$\ddom,\indep,n$} with cutoff checking disabled
computes the unfolding of $\hat\ddom$, so we have that
\[
\pref{\ddom,\indep} =
\unf{\hat\ddom,\hatindep}.
\]
The theorem holds as a consequence of \cref{lem:dhat.correct}.
\end{proof}

\subsection{Formalizing Cutoff Events}
\label{sec:app.cutoffs}

In this section, a new cutoff criterion is defined that exploits the lattice
order~$\sle$ for more aggressive pruning than standard cutoffs.
Let~$\ddom$ be an analysis instance and~$\indep$ a weak independence.

In order to prune the unfolding, we need to refer to the order in which it is
constructed.
A \emph{strategy} is any strict (partial) order~$\prec$ on the finite
configurations of~$\unf{\ddom,\indep}$ satisfying that
when $C \subseteq C'$, then $C \prec C'$.
In other words, strategies refine the natural order in which the domain is
unfolded.

Each strategy identifies a set of feasible and cutoff events.
Intuitively, feasible events will be those which have no cutoff among the set
of causal predecessors:

\begin{definition}[Cutoffs]
\label{def:cutoffs}
An event $e$ of $\unf{\ddom,\indep}$ is \emph{$\prec$-feasible}
if all causal predecessors $e' \in \causes e$ are not $\prec$-cutoff.
A $\prec$-feasible event is a $\prec$-cutoff if
there exists some $\prec$-feasible event $e'$ in $\unf{\ddom,\indep}$,
called the \emph{corresponding event}, such that
$[e'] \prec [e]$ and
\begin{equation}
  \label{e:cutoff}
  \state{[e]} \sle \state{[e']}.
\end{equation}
\end{definition}

In other words, $e$ will be a cutoff iff the fact reached by the branch it
represents (local configuration) has already been ``\textit{seen}'' when~$\ddom$
is unfolded in the order stated by~$\prec$.
Observe that ``\textit{seen}'' formally means that another equally or less
precise element has been unfolded before.

While the notion of cutoffs has been around for a while in the literature of
unfoldings~\cite{Mcm93,ERV02,BHKTV14},
to the best of our knowledge, \defref{cutoffs} is the first to use a 
subsumption relation to match the corresponding event.
The most general previous definition~\cite{BHKTV14} only allowed states to be
compared using equivalence relations in~\eref{cutoff},
while we used the partial order~$\sle$.
The set of $\prec$-feasible events defines an unfolding prefix of~$\uunf$:

\begin{definition}[Feasible prefix]
\label{def:feasible.prefix}
The $\prec$-prefix of~$\ddom$ is the unique unfolding prefix
$\precpref{\ddom,\indep}$
of~$\unf{\ddom,\indep}$ that contains exactly all $\prec$-feasible events which
are not $\prec$-cutoffs.
\end{definition}

The shape and properties of the $\prec$-prefix strongly depend on the underlying
strategy~$\prec$.
One is interested in strategies that identify complete prefixes.

A well-known unfolding strategy is the
\emph{size order}~${\prec_s} \subseteq E \times E$, defined by
Ken McMillan in his seminal paper~\cite{Mcm93} as
$C \prec_s C'$ iff $|C| < |C'|$.
Adequate strategies~\cite{ERV02,BHKTV14} were discovered later and yield up to
exponentially smaller prefixes.
In order to keep the presentation concise, we restrict next theorem to
the size order (although it also holds for adequate strategies).

\begin{restatable}{theorem}{prefixcomplete}
\label{thm:complete}
The unfolding prefix $\pref{\ddom,\indep}^{\prec_s}$ is $\ddom$-complete.
\end{restatable}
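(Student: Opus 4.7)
My plan is to argue by well-founded descent on the size order $\prec_s$. Given a reachable element $d \in \reach \ddom$, I pick some $\sigma \in \runs \ddom$ with $\state \sigma = d$ and use Theorem~\ref{thm:unf.complete} to obtain a configuration $C_0$ of $\unf{\ddom,\indep}$ with $\sigma \in \inter{C_0}$; Theorem~\ref{thm:unf.sound}(1) then yields $\state{C_0} = d$. Define $\mathcal C \eqdef \set{C \in \conf{\unf{\ddom,\indep}} \colon \state C \sge d}$. Since $\mathcal C$ is nonempty and $\prec_s$ is well-founded (configuration sizes lie in $\nat$), I may fix a $\prec_s$-minimal $C^\star \in \mathcal C$.

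The goal reduces to showing that every event of $C^\star$ is $\prec_s$-feasible and not a $\prec_s$-cutoff, so that $C^\star$ itself is a configuration of $\pref{\ddom,\indep}^{\prec_s}$. Assume for contradiction that $C^\star$ contains a non-feasible event or a cutoff, and pick such an $e \in C^\star$ whose local configuration $[e]$ has minimal cardinality. Then every strict causal predecessor of $e$ is feasible and non-cutoff, so by \cref{def:cutoffs} the event $e$ itself is feasible, which forces $e$ to be a cutoff. Let $e'$ be a corresponding feasible event, so that $|[e']| < |[e]|$ and $\state{[e']} \sge \state{[e]}$.

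The heart of the proof is a transplant producing a configuration $C' \in \mathcal C$ strictly $\prec_s$-smaller than $C^\star$. Write $C^\star = [e] \cup X$ with $X \eqdef C^\star \setminus [e]$ and pick an interleaving $\sigma_e \tau$ of $C^\star$ in which $\sigma_e \in \inter{[e]}$ comes first and $\tau$ linearises $X$ according to causality; such a splitting exists because $e$ is $<$-maximal in $[e]$ and no event of $X$ can be causally below any event of $[e]$ (else it would belong to $[e]$ by causal closure). Fix any $\sigma_{e'} \in \inter{[e']}$. Using Theorem~\ref{thm:unf.sound} together with monotonicity of the transformers labelling $\tau$, each prefix of $\tau$ applied after $\sigma_{e'}$ evaluates above the corresponding prefix applied after $\sigma_e$, so $\state{\sigma_{e'}\tau} \sge \state{C^\star} \sge d \ne \bot$. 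Bottom-strictness then forces every intermediate state to be non-$\bot$, making $\sigma_{e'}\tau$ a run of $\ddom$. A second application of Theorem~\ref{thm:unf.complete} delivers a unique $C' \in \conf{\unf{\ddom,\indep}}$ with $\sigma_{e'}\tau \in \inter{C'}$, and $|C'| = |[e']| + |X| < |[e]| + |X| = |C^\star|$ while $\state{C'} \sge d$, contradicting the $\prec_s$-minimality of $C^\star$.

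The main obstacle is precisely this transplant step: producing a transformer sequence matching $\sigma_{e'}\tau$ is routine, but realising the result as a \emph{genuine configuration} of $\unf{\ddom,\indep}$ of the expected size $|[e']| + |X|$ requires the combined power of adequacy (Theorem~\ref{thm:unf.complete}), monotonicity, and bottom-strictness. It is worth noting that weakening the cutoff criterion from the lattice order $\sle$ to an arbitrary preorder would break the monotonicity step and derail the transplant, which is exactly the point at which our subsumption-based cutoffs are more delicate than the equivalence-based cutoffs of \cite{Mcm93,ERV02,BHKTV14}.
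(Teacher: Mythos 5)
Your proof is correct and follows essentially the same route as the paper's: replace the cutoff's local configuration $[e]$ by the corresponding event's $[e']$, append a linearisation of $C \setminus [e]$, invoke adequacy (\cref{thm:unf.complete}) to realise the result as a strictly smaller configuration still covering $d$, and conclude by descent on size. You merely package the paper's explicit iteration as a minimal-counterexample argument and spell out the feasibility of the chosen cutoff and the monotonicity step, which the paper's sketch leaves implicit.
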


\begin{proof}
(\emph{Sketch})
Let $d \in \reach \ddom$ be a reachable state.
Then there is some configuration $C$ in $\unf{\ddom,\indep}$ such that
$d = \state C$.
If $C$ is free of $\prec$-cutoff events, then $C$ is
in~$\pref{\ddom,\indep}^{\prec_s}$ and we found the configuration that we
searched.

If not, let $e \in C$ be a $\prec$-cutoff event and
$e'$ the corresponding event in $\unf{\ddom,\indep}$.
Since $d \ne \bot$, any interleaving of~$C$ is a run of~$\ddom$.
Since $\state{[e']} \sge \state{[e]}$,
any interleaving of $[e]$ can be extended with the transformers that
label in any topological sorting of the events in $C \setminus [e]$,
and the resulting sequence is a run $\sigma' \in \runs \ddom$
that satisfies $d \sle \state{\sigma'}$.
Furthermore, since all runs of~$\ddom$ are represented as (unique)
configurations of $\unf{\ddom,\indep}$, it is possible to extend
configuration~$[e]$ into a unique configuration that represents $\sigma'$.
Let it be~$C'$. We have that $d = \state C \sle \state{C'}$, and
$C'$ is at least one event smaller than~$C$.

If $C'$ has no cutoff, then we found the configuration that we were searching.
If not, we only need to repeat this argument a finite number of times (since
every time we remove at least one event from the configuration) until we find a
configuration that reaches a state that covers~$d$.
\end{proof}

\subsection{Pruning without Relaxed Independence}
\label{sec:limcutoffs}

In \secref{abs-unf} we unfolded an abstract domain~$\bar\ddom$ into an event
structure
$\unf{\bar\ddom,\barindep}$ under an independence relation~$\barindep$ that was
weak in the concrete domain~$\ddom$ but non-weak
in the abstract one~$\bar\ddom$.

It~would be natural to extend the cutoff criterion introduced above,
which requires a weak independence, to employ the non-weak
relation~$\barindep$.
Unfortunately, in this case the feasible $\prec$-prefix is not necessarily
complete.
The proof of \thmref{complete} relies on the fact that all 
runs of~$\bar\ddom$ will appear under the form of one configuration
in~$\unf{\bar\ddom,\barindep}$.
However, the non-weak relation~$\barindep$ fails to guarantee that.

Alternatively, one may try to change the completeness criterion,
asking that all facts reachable in the concrete domain~$\ddom$ are present in
the unfolding of the abstract domain.
Unfortunately, proving \thmref{complete} with this notion of completeness
fails again, for the same reason.
The reasoning behind the notion of cutoff events
fundamentally relies on the fact that arbitrary executions of~$\bar\ddom$ must
be present in~$\unf{\bar\ddom,\barindep}$.

Therefore, using cutoff criteria for the abstract unfolding is possible
only together with weak independence relations.
Fortunately, at least for simple domains such as intervals, computing a weak
independence seems to be reasonably inexpensive.

\subsection{Results in \cref{sec:cutoffs}: Cutoff Events}

In the following proof we make use of the collapsing domain introduced
in~\cref{sec:app.collapse}.

\algoprefixcomplete*

\begin{proof}
If $\indep$ respects independence, then it is straightforward to show
that~$\hatindep$ is a weak independence in $\hat\ddom$.
That means that \cref{thm:complete} is applicable and implies that
$\preff{\ddom,\indep}$ is $\ddom$-complete,
as \cref{a:a1} computes exactly that prefix.
\end{proof}

\end{document}